\newcommand*\patchAmsMathEnvironmentForLineno[1]{%
\expandafter\let\csname old#1\expandafter\endcsname\csname #1\endcsname
\expandafter\let\csname oldend#1\expandafter\endcsname\csname end#1\endcsname
\renewenvironment{#1}%
{\linenomath\csname old#1\endcsname}%
{\csname oldend#1\endcsname\endlinenomath}}%
\newcommand*\patchBothAmsMathEnvironmentsForLineno[1]{%
\patchAmsMathEnvironmentForLineno{#1}%
\patchAmsMathEnvironmentForLineno{#1*}}%
\theoremstyle{definition}
\newtheorem{example}{Example}
\newtheorem{lemma}{Lemma}
\newtheorem{proposition}{Proposition}
\newtheorem{remark}{Remark}
\newtheorem{theorem}{Theorem}
\newcommand{\eq}[1]{\textbf{Eq.~\ref{eq:#1}}}
\newcommand{\fig}[1]{\textbf{Fig.~\ref{fig:#1}}}
\newcommand{\tab}[1]{\textbf{Table~\ref{tab:#1}}}
\title{Reactive learning strategies for iterated games}
\author{Alex McAvoy and Martin A. Nowak}
\begin{document}

\allowdisplaybreaks

\maketitle

\begin{abstract}
In an iterated game between two players, there is much interest in characterizing the set of feasible payoffs for both players when one player uses a fixed strategy and the other player is free to switch. Such characterizations have led to extortionists, equalizers, partners, and rivals. Most of those studies use memory-one strategies, which specify the probabilities to take actions depending on the outcome of the previous round. Here, we consider ``reactive learning strategies," which gradually modify their propensity to take certain actions based on past actions of the opponent. Every linear reactive learning strategy, $\mathbf{p}^{\ast}$, corresponds to a memory one-strategy, $\mathbf{p}$, and vice versa. We prove that for evaluating the region of feasible payoffs against a memory-one strategy, $\mathcal{C}\left(\mathbf{p}\right)$, we need to check its performance against at most $11$ other strategies. Thus, $\mathcal{C}\left(\mathbf{p}\right)$ is the convex hull in $\mathbb{R}^{2}$ of at most $11$ points. Furthermore, if $\mathbf{p}$ is a memory-one strategy, with feasible payoff region $\mathcal{C}\left(\mathbf{p}\right)$, and $\mathbf{p}^{\ast}$ is the corresponding reactive learning strategy, with feasible payoff region $\mathcal{C}\left(\mathbf{p}^{\ast}\right)$, then $\mathcal{C}\left(\mathbf{p}^{\ast}\right)$ is a subset of $\mathcal{C}\left(\mathbf{p}\right)$. Reactive learning strategies are therefore powerful tools in restricting the outcomes of iterated games.
\end{abstract}

\textbf{Keywords:} adaptive strategy; iterated game; memory-one strategy; social dilemma

\section{Introduction}
Since the discovery of zero-determinant strategies for iterated games by \citet{press:PNAS:2012}, there has been a growing interest in the set of possible payoffs that can be achieved against a fixed strategy. Imagine that Alice uses a particular strategy, while Bob can try out any conceivable strategy. The resulting set of payoffs for both Alice and Bob define the ``feasible region'' of Alice's strategy. If Alice uses a so-called zero-determinant strategy \cite{press:PNAS:2012}, then the feasible region is a line. In general, the feasible region is a two-dimensional convex subset of the feasible payoff region of the game (\fig{convexHull}). Using the geometric intuition put forth by \citet{press:PNAS:2012}, subsequent work has explored strategies that generate two-dimensional feasible regions, defined by linear inequalities rather than strict equations \citep{akin:Games:2015,hilbe:GEB:2015,hilbe:NHB:2018}. However, a general description of what this region looks like, as it relates to the type of strategy played, is currently not well-understood. In this study, we characterize the feasible regions for the well-known class of memory-one strategies \cite{nowak:Nature:1993} and consider their relationships to those of a new class of ``reactive learning strategies."

Iterated games have many applications across the social sciences and biology, and with them has come a proliferation of strategy classes of various complexities \citep{axelrod:BB:1984,lehrer:JET:1988,hauert:PRSLB:1997,nowak:Science:2006,hilbe:PNAS:2017}. The type of strategy a player uses for dealing with repeated encounters depends on many factors, including the cognitive capacity of the player and the nature of the underlying ``one-shot" (or ``stage") games. In applications to theoretical biology, the most well-studied type of strategy is known as ``memory-one" because it takes into account the outcome of only the previous encounter when determining how to play in the next round \citep{nowak:Nature:1993,baek:SR:2016}. This class of strategies, while forming only a small subset of all possible ways to play an iterated game \citep{fudenberg:MIT:1991}, has several advantages over more complicated strategies. They permit rich behavior in iterated play, such as punishment for exploitation and reward for cooperation \citep{nowak:Nature:1993,posch:JTB:1999,dalbo:AER:2005,nowak:BP:2006,barlo:JET:2009,dalbo:AER:2011,stewart:SR:2016}; but, owing to their simple memory requirements, they are also straightforward to implement in practice and analyze mathematically.

Memory, however, can apply to more than just the players' actions in the previous round. Since the action a player chooses in any particular encounter is typically chosen stochastically rather than deterministically, a player can also take into account \textit{how} they chose their previous action rather than just the result. In a social dilemma, for instance, each player chooses an action (``cooperate," $C$, or ``defect," $D$) in a given round and receives a payoff for this action against that of the opponent. The distribution with which this action is chosen is referred to as a ``mixed action" and can be specified by a single number between $0$ and $1$, representing the tendency to cooperate. A standard memory-one strategy for player $X$ is given by a five-tuple, $\left(p_{0},p_{CC},p_{CD},p_{DC},p_{DD}\right)$, where $p_{0}$ is the probability of cooperation in the initial round and $p_{xy}$ is the probability of cooperation following an outcome in which $X$ uses action $x$ and the opponent, $Y$, uses action $y$. We consider a variation on this theme, where instead of using $x$ and $y$ to determine the next mixed action, $X$ uses the opponent's action, $y$, to update their own mixed action, $\sigma_{X}\in\left[0,1\right]$, that was used previously to generate $x$. We refer to a strategy of this form as a ``reactive learning strategy."

Such a strategy is ``reactive" because it takes into account the realized action of just the opponent, and it is ``learning" because it adapts to this external stimulus. Like a memory-one strategy, a reactive learning strategy for $X$ requires knowledge of information one round into the past, namely $X$'s mixed action, $\sigma_{X}$, and $Y$'s realized action, $y$. Unlike a memory-one strategy, in which the probability of cooperation is in the set $\left\{p_{0},p_{CC},p_{CD},p_{DC},p_{DD}\right\}$ in every round of the game, a reactive learning strategy can result in a broad range of cooperation tendencies for $X$ over the duration of an iterated game. Moreover, these tendencies can be gradually changed over the course of many rounds, resulting (for example) in high probabilities of cooperation only after the opponent has demonstrated a sufficiently long history of cooperating. Punishment for defection can be similarly realized over a number of interactions. Remembering a probability, $\sigma_{X}$, and an action, $y$, instead of just two actions, $x$ and $y$, can thus lead to more complex behaviors.

This adaptive approach to iterated games is similar to the Bush-Mosteller reinforcement learning algorithm \citep{bush:AMS:1953,roth:GEB:1995,izquierdo:RL:2008}, but there are important distinctions. For one, a reactive learning strategy does not necessarily reinforce behavior resulting in higher payoffs. Furthermore, it completely disregards the focal player's realized action, using only that of the opponent in the update mechanism. But there are certainly reactive learning strategies that are more closely related to reinforcement learning, and we give an example using a variation on the memory-one strategy tit-for-tat (TFT), which we call ``learning tit-for-tat (LTFT)."

In this study, we establish some basic properties of reactive learning strategies relative to the memory-one space. We first characterize the feasible region of a memory-one strategy as the convex hull of at most $11$ points. When then show that there is an embedding of the set of memory-one strategies in the set of reactive learning strategies with the following property: if $\mathbf{p}$ is a memory-one strategy and $\mathbf{p}^{\ast}$ is the corresponding reactive learning strategy, then the feasible region of $\mathbf{p}$ contains the feasible region of $\mathbf{p}^{\ast}$. Moreover, the image of the map $\mathbf{p}\mapsto\mathbf{p}^{\ast}$ is the set of \textit{linear} reactive learning strategies, which consists of those strategies that send a player's mixed action, $\sigma_{X}$, to $\alpha\sigma_{X}+\beta$ for some $\alpha ,\beta\in\left[0,1\right]$. As a consequence, if the goal of a player is to restrict the region of payoffs attainable by the players, then this player should prefer using a linear reactive learning strategy over the corresponding memory-one strategy.

\section{Memory-one strategies}
Consider an iterated game between two players, $X$ and $Y$. In every round, each player chooses an action from the set $\left\{C,D\right\}$ (``cooperate" or ``defect"). They receive payoffs based on the values in the matrix
\begin{align}
\bordermatrix{%
& C & D \cr
C &\ R & \ S \cr
D &\ T & \ P \cr
} .
\end{align}
Over many rounds, these payoffs are averaged to arrive at an expected payoff for each player.

Whereas an action specifies the behavior of a player in one particular encounter, a strategy specifies how a player behaves over the course of many encounters. One of the simplest and best-studied strategies for iterated games is a memory-one strategy \citep{nowak:Nature:1993}, which for player $X$ is defined as follows: for every $\left(x,y\right)\in\left\{C,D\right\}^{2}$ observed as action outcomes of a given round, $X$ devises a mixed action $p_{xy}\in\left[0,1\right]$ for the next round. The notation $p_{xy}$ indicates that this mixed action depends on the (pure) actions of both players in the previous round, not how they arrived at those actions (e.g. by generating an action probabilistically). The term ``strategy" is reserved for the players' behaviors in the iterated game.

Let $\textbf{Mem}_{X}^{1}$ be the space of all memory-one strategies for player $X$ in an iterated game. With just two actions, $C$ and $D$, we have $\textbf{Mem}_{X}^{1}=\left[0,1\right]\times\left[0,1\right]^{4}$, i.e. the space of all $\left(p_{0},p_{CC},p_{CD},p_{DC},p_{DD}\right)\in\left[0,1\right]^{5}$. A pair of memory-one strategies, $\mathbf{p}\coloneqq\left(p_{0},p_{CC},p_{CD},p_{DC},p_{DD}\right)$ and $\mathbf{q}\coloneqq\left(q_{0},q_{CC},q_{CD},q_{DC},q_{DD}\right)$, for $X$ and $Y$, respectively, yield a Markov chain on the space of all action pairs, $\left\{C,D\right\}^{2}$, whose transition matrix is
\begin{align}
M\left(\mathbf{p},\mathbf{q}\right) &= 
\bordermatrix{%
& CC & CD & DC & DD \cr
CC & p_{CC}q_{CC} &\ p_{CC}\left(1-q_{CC}\right) &\ \left(1-p_{CC}\right) q_{CC} &\ \left(1-p_{CC}\right)\left(1-q_{CC}\right) \cr
CD & p_{CD}q_{DC} &\ p_{CD}\left(1-q_{DC}\right) &\ \left(1-p_{CD}\right) q_{DC} &\ \left(1-p_{CD}\right)\left(1-q_{DC}\right) \cr
DC & p_{DC}q_{CD} &\ p_{DC}\left(1-q_{CD}\right) &\ \left(1-p_{DC}\right) q_{CD} &\ \left(1-p_{DC}\right)\left(1-q_{CD}\right) \cr
DD & p_{DD}q_{DD} &\ p_{DD}\left(1-q_{DD}\right) &\ \left(1-p_{DD}\right) q_{DD} &\ \left(1-p_{DD}\right)\left(1-q_{DD}\right) \cr
} \label{eq:memOneMatrix}
\end{align}
and whose initial distribution is $\mu_{0}\coloneqq\left(p_{0}q_{0},p_{0}\left(1-q_{0}\right) ,\left(1-p_{0}\right) q_{0},\left(1-p_{0}\right)\left(1-q_{0}\right)\right)$. If $p_{xy},q_{xy}\in\left(0,1\right)$ for every $x,y\in\left\{C,D\right\}$, then this chain is ergodic and has a unique stationary distribution, $\mu\left(\mathbf{p},\mathbf{q}\right)$, which is independent of $\mu_{0}$. In particular, the expected payoffs, $\pi_{X}\left(\mathbf{p},\mathbf{q}\right) =\mu\left(\mathbf{p},\mathbf{q}\right)\cdot\left(R,S,T,P\right)$ and $\pi_{Y}\left(\mathbf{p},\mathbf{q}\right) =\mu\left(\mathbf{p},\mathbf{q}\right)\cdot\left(R,T,S,P\right)$, are independent of $p_{0}$ and $q_{0}$. In this case, $\pi_{X}$ and $\pi_{Y}$ are functions of just the response probabilities, $\mathbf{p}_{\bullet\bullet}\coloneqq\left(p_{CC},p_{CD},p_{DC},p_{DD}\right)$ and $\mathbf{q}_{\bullet\bullet}\coloneqq\left(q_{CC},q_{CD},q_{DC},q_{DD}\right)$.

A useful way of thinking about a strategy is through its feasible region, i.e. the set of all possible payoff pairs (for $X$ and $Y$) that can be achieved against it. For any memory-one strategy $\mathbf{p}$ of $X$, let
\begin{align}
\mathcal{C}\left(\mathbf{p}\right) &\coloneqq \left\{\left(\pi_{Y}\left(\mathbf{p},\mathbf{q}\right) ,\pi_{X}\left(\mathbf{p},\mathbf{q}\right)\right)\right\}_{\mathbf{q}\in\textbf{Mem}_{X}^{1}}
\end{align}
be this feasible region. (Note that, if $X$ uses a memory-one strategy, then it suffices to assume that $Y$ uses a memory-one strategy by the results of \citet{press:PNAS:2012}.) This subset of the feasible region represents the ``geometry" of strategy $\mathbf{p}$ in the sense that it captures all possible payoff pairs against an opponent.

In this section, we show that the feasible region for $\mathbf{p}\in\textbf{Mem}_{X}^{1}$ with $\mathbf{p}_{\bullet\bullet}\in\left(0,1\right)^{4}$ is characterized by playing $\mathbf{p}$ against the following $11$ strategies: $\left(0,0,0,0\right)$, $\left(0,0,0,1\right)$, $\left(0,0,1,0\right)$, $\left(0,0,1,1\right)$, $\left(0,1,0,1\right)$, $\left(0,1,1,0\right)$, $\left(0,1,1,1\right)$, $\left(1,0,0,1\right)$, $\left(1,0,1,0\right)$, $\left(1,0,1,1\right)$, and $\left(1,1,1,1\right)$. In other words, $\mathcal{C}\left(\mathbf{p}\right)$ is the convex hull of $11$ points (see \fig{convexHull}). Therefore, any $\mathbf{p}\in\textbf{Mem}_{X}^{1}$ generates a simple polygon in $\mathbb{R}^{2}$ whose number of extreme points is uniformly bounded over all game-strategy pairs, $\left(\left(R,S,T,P\right) ,\mathbf{p}\right)$.

\begin{figure}
\centering
\includegraphics[width=0.7\textwidth]{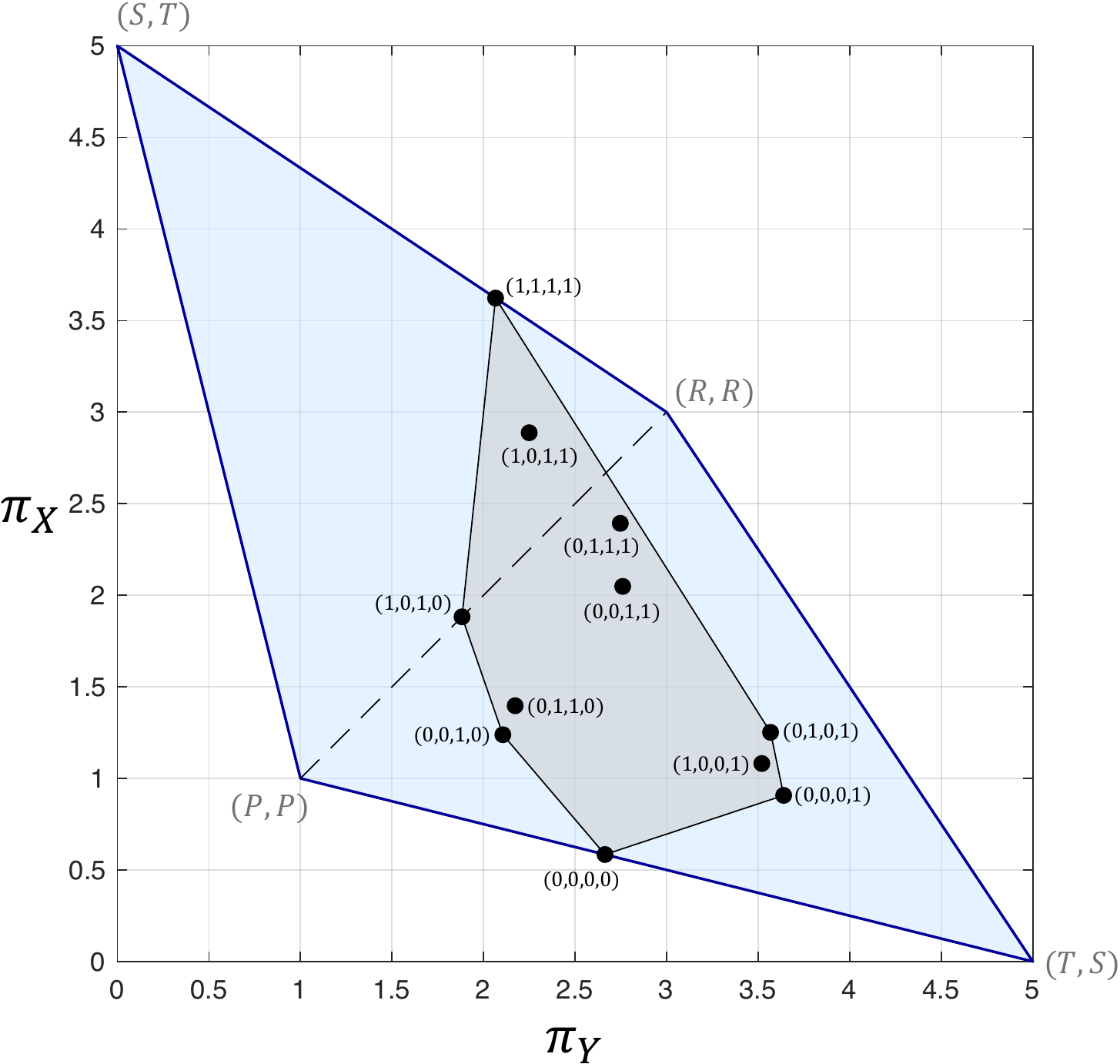}
\caption{\small Feasible region (grey) for a strategy with $\mathbf{p}_{\bullet\bullet}=\left(0.7881,0.8888,0.4686,0.0792\right)$ when $R=3$, $S=0$, $T=5$, and $P=1$. The light blue region depicts the set of all payoff pairs that can be achieved in the iterated game, i.e. the convex hull of the points $\left(R,R\right)$, $\left(S,T\right)$, $\left(P,P\right)$, and $\left(T,S\right)$. The feasible region of $\mathbf{p}$ can be characterized as the convex hull of $11$ points, corresponding to those opponent-strategies, $\mathbf{q}$, appearing next to each black dot. In this instance, five of these points already fall inside of the convex hull of the remaining six. However, one cannot remove one of these $11$ points without destroying this characterization for some game-strategy pair.\label{fig:convexHull}}
\end{figure}

\begin{lemma}\label{lem:oneCoordinateLinear}
For $\mathbf{q}\in\textbf{Mem}_{X}^{1}$ and $x,y\in\left\{C,D\right\}$, let $\left(\mathbf{q};q_{xy}=q_{xy}'\right)$ be the strategy obtained from $\mathbf{q}$ by changing $q_{xy}$ to $q_{xy}'\in\left[0,1\right]$. If $\mathbf{p}_{\bullet\bullet}\in\left(0,1\right)^{4}$, $\mathbf{q}\in\textbf{Mem}_{X}^{1}$, and $x,y\in\left\{C,D\right\}$, then the point $\left(\pi_{Y}\left(\mathbf{p},\mathbf{q}\right) ,\pi_{X}\left(\mathbf{p},\mathbf{q}\right)\right)$ falls on the line joining $\left(\pi_{Y}\left(\mathbf{p},\left(\mathbf{q};q_{xy}=0\right)\right) ,\pi_{X}\left(\mathbf{p},\left(\mathbf{q};q_{xy}=0\right)\right)\right)$ and $\left(\pi_{Y}\left(\mathbf{p},\left(\mathbf{q};q_{xy}=1\right)\right) ,\pi_{X}\left(\mathbf{p},\left(\mathbf{q};q_{xy}=1\right)\right)\right)$.
\end{lemma}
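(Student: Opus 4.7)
The plan is to show that, with all coordinates of $\mathbf{q}$ other than $q_{xy}$ held fixed, both payoffs $\pi_{X}(\mathbf{p},\mathbf{q})$ and $\pi_{Y}(\mathbf{p},\mathbf{q})$ are rational functions of the single variable $q_{xy}$ of the form $(A + B q_{xy})/(C + D q_{xy})$ with a \emph{common} denominator. A short algebraic argument then shows that the map $q_{xy} \mapsto (\pi_{Y}, \pi_{X})$ traces a straight line in $\mathbb{R}^{2}$.

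First, I would inspect the transition matrix $M(\mathbf{p},\mathbf{q})$ in \eq{memOneMatrix} and observe that $q_{xy}$ appears in exactly one row (the one labeled by the state in which $X$ played $y$ and $Y$ played $x$ in the previous round), and only to the first power. Hence $M(\mathbf{p},\mathbf{q}) - I$ is affine in $q_{xy}$, with only one row varying. Because $\mathbf{p}_{\bullet\bullet}\in(0,1)^{4}$ guarantees ergodicity, the stationary distribution is uniquely determined by $\mu(M - I) = 0$ and $\sum_{i}\mu_{i}=1$, and Cramer's rule (equivalently, the standard cofactor formula for the stationary vector) expresses each $\mu_{i}$ as a ratio of two $4\times 4$ determinants in which exactly one row depends on $q_{xy}$, and affinely. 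Expanding along that row shows each such determinant is affine in $q_{xy}$, so $\mu_{i} = (\alpha_{i} + \beta_{i} q_{xy})/(\gamma + \delta q_{xy})$ with denominator independent of $i$. Forming $\pi_{X} = \mu\cdot(R,S,T,P)$ and $\pi_{Y} = \mu\cdot(R,T,S,P)$ preserves this structure and the common denominator.

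Second, I would verify a purely algebraic collinearity fact: if $f_{1}(q) = (A_{1} + B_{1} q)/(C + D q)$ and $f_{2}(q) = (A_{2} + B_{2} q)/(C + D q)$ share the same affine denominator (which is positive on $[0,1]$ by ergodicity), then a direct computation gives
\begin{equation*}
\bigl( f_{1}(q) - f_{1}(0),\ f_{2}(q) - f_{2}(0) \bigr) = \frac{q}{C(C+Dq)} \bigl( B_{1} C - A_{1} D,\ B_{2} C - A_{2} D \bigr),
\end{equation*}
which is a scalar multiple of the displacement $\bigl( f_{1}(1) - f_{1}(0),\ f_{2}(1) - f_{2}(0) \bigr) = \tfrac{1}{C(C+D)} (B_{1} C - A_{1} D,\ B_{2} C - A_{2} D)$; the common scalar factor is $q(C+D)/(C+Dq)$, independent of the $A_{i}, B_{i}$. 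Applying this with $f_{1} = \pi_{Y}$ and $f_{2} = \pi_{X}$ as functions of $q_{xy}$ gives the lemma.

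The main conceptual obstacle is justifying that the Cramer's rule expressions for the four $\mu_{i}$'s admit a single common denominator; once that is clear, the rest is a direct calculation. The fact that $q_{xy}$ appears in only one row of $M$ is precisely what forces both numerator and denominator to be affine (rather than quadratic or higher) in $q_{xy}$, and this is the structural reason why perturbing a single coordinate of $\mathbf{q}$ produces a straight-line trajectory in the payoff plane.
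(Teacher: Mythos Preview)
Your argument is correct and is morally the same as the paper's, but the execution is different enough to merit a word. The paper simply solves for the stationary distribution symbolically (e.g.\ via the Press--Dyson determinant formula), writes down an explicit convex-combination weight $L(q_{CC})$ as a large rational expression, and then checks by inspection that $L(0)=1$, $L(1)=0$, and $L'\neq 0$ on $[0,1]$, so $0\le L\le 1$. Your route is more structural: you observe that $q_{xy}$ enters a single row of $M(\mathbf{p},\mathbf{q})$ affinely, invoke the cofactor/Cramer representation of the stationary vector to conclude that every $\mu_i$---and hence $\pi_X,\pi_Y$---is linear-fractional in $q_{xy}$ with a \emph{common} denominator, and then give a clean two-line identity showing that any pair of such functions traces a line. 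This avoids the page-long explicit formula and makes transparent \emph{why} the one-coordinate perturbation is linear in the payoff plane.

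One small point worth making explicit: the paper verifies $0\le L(q_{xy})\le 1$, i.e.\ that the point lies on the \emph{segment} between the two endpoints, which is what the subsequent convex-hull proposition actually uses. Your parametrization already contains this: writing $\lambda(q)=q(C+D)/(C+Dq)$, positivity of the denominator $C+Dq$ on $[0,1]$ (from ergodicity) gives $\lambda(0)=0$, $\lambda(1)=1$, and $qC\le C$ yields $\lambda(q)\le 1$ for $q\in[0,1]$. It would strengthen your write-up to state this explicitly.
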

\begin{proof}
Let $\mathbf{p}_{\bullet\bullet}\in\left(0,1\right)^{4}$ and $\mathbf{q}\in\textbf{Mem}_{X}^{1}$. Since the transition matrix of \eq{memOneMatrix} is just $4\times 4$, one can directly solve for its stationary distribution, $\mu\left(\mathbf{p},\mathbf{q}\right)$ (e.g. by using Gaussian elimination or the determinant formula of \citet{press:PNAS:2012}). For example, suppose that $x=y=C$. Then, with
\begin{align}
L\left(q_{CC}\right) &\coloneqq \frac{\left(1-q_{CC}\right)}{1+q_{CC}\left(\frac{\substack{p_{CC} - p_{CC}p_{CD} + p_{CC}p_{DD} - p_{CC}q_{CD} + p_{CC}q_{DD} + p_{DC}q_{CD} - p_{DD}q_{DD} + p_{CC}p_{CD}q_{CD} \\ - p_{CC}p_{CD}q_{DD} - p_{CC}p_{DD}q_{CD} - p_{CD}p_{DC}q_{CD} - p_{CC}p_{DC}q_{DC} + p_{CC}p_{DC}q_{DD} \\ + p_{CC}p_{DD}q_{DC} + p_{CD}p_{DC}q_{DC} - p_{CD}p_{DD}q_{DC} + p_{DC}p_{DD}q_{CD} + p_{CD}p_{DD}q_{DD} - p_{DC}p_{DD}q_{DD} \\ \quad \\ \quad}}{\substack{\quad \\ \quad \\ p_{DD} - p_{CD} - q_{CD} + q_{DD} + p_{CD}q_{CD} + p_{CD}q_{DC} + p_{DC}q_{CD} - p_{CD}q_{DD} - p_{DD}q_{CD} \\ - p_{DC}q_{DC} + p_{DC}q_{DD} + p_{DD}q_{DC} - p_{DD}q_{DD} - p_{CC}p_{CD}q_{DC} - p_{CD}p_{DC}q_{CD} + p_{CD}p_{DC}q_{DC} \\ + p_{CC}p_{DD}q_{DD} + p_{DC}p_{DD}q_{CD} - p_{DC}p_{DD}q_{DD} - p_{CD}q_{CD}q_{DC} + p_{DC}q_{CD}q_{DC} + p_{CD}q_{DC}q_{DD} \\ - p_{DD}q_{DC}q_{DD} + p_{CC}p_{CD}q_{CD}q_{DC} - p_{CC}p_{DC}q_{CD}q_{DC} - p_{CC}p_{CD}q_{DC}q_{DD} + p_{CC}p_{DC}q_{CD}q_{DD} \\ - p_{CC}p_{DD}q_{CD}q_{DD} - p_{CD}p_{DC}q_{CD}q_{DD} - p_{CD}p_{DD}q_{CD}q_{DC} + p_{CD}p_{DD}q_{CD}q_{DD} \\ + p_{CC}p_{DD}q_{DC}q_{DD} + p_{CD}p_{DC}q_{DC}q_{DD} + p_{DC}p_{DD}q_{CD}q_{DC} - p_{DC}p_{DD}q_{DC}q_{DD} + 1}}\right)} ,
\end{align}
one has
\begin{align}
\left(\pi_{Y}\left(\mathbf{p},\mathbf{q}\right) ,\pi_{X}\left(\mathbf{p},\mathbf{q}\right)\right) &= L\left(q_{CC}\right)\left(\pi_{Y}\left(\mathbf{p},\left(\mathbf{q};q_{CC}=0\right)\right) ,\pi_{X}\left(\mathbf{p},\left(\mathbf{q};q_{CC}=0\right)\right)\right) \nonumber \\
&\quad + \left(1-L\left(q_{CC}\right)\right)\left(\pi_{Y}\left(\mathbf{p},\left(\mathbf{q};q_{CC}=1\right)\right) ,\pi_{X}\left(\mathbf{p},\left(\mathbf{q};q_{CC}=1\right)\right)\right) .
\end{align}
Provided $\left(\pi_{Y}\left(\mathbf{p},\left(\mathbf{q};q_{CC}=0\right)\right) ,\pi_{X}\left(\mathbf{p},\left(\mathbf{q};q_{CC}=0\right)\right)\right)\neq\left(\pi_{Y}\left(\mathbf{p},\left(\mathbf{q};q_{CC}=1\right)\right) ,\pi_{X}\left(\mathbf{p},\left(\mathbf{q};q_{CC}=1\right)\right)\right)$, we also have $L\left(0\right) =1$ and $L\left(1\right) =0$. Moreover, one can check that, under this condition, $L'\left(q_{CC}\right)$ is nowhere equal to $0$, and $0\leqslant L\left(q_{CC}\right)\leqslant 1$ for every $q_{CC}\in\left[0,1\right]$. The other cases with $x,y\in\left\{C,D\right\}$ are analogous.
\end{proof}

\begin{remark}
Even when $q_{xy}$ is uniformly distributed between $0$ and $1$, the corresponding points in the feasible region need not be uniformly distributed between the endpoints corresponding to $q_{xy}=0$ and $q_{xy}=1$, respectively (see \fig{nonUniformLine}). This result is therefore somewhat different from the analogous situation of playing against a mixed action in a stage game, where, for a payoff function $u:S_{X}\times S_{Y}\rightarrow\mathbb{R}^{2}$ and mixed action $\sigma_{X}\in\Delta\left(S_{X}\right)$ and $\sigma_{Y}\in\Delta\left(S_{Y}\right)$, one has $u\left(\sigma_{X},\sigma_{Y}\right) =\int_{y\in S_{Y}}u\left(\sigma_{X},y\right)\,d\sigma_{Y}\left(y\right)$ due to linearity.
\end{remark}

\begin{figure}
\centering
\includegraphics[width=0.7\textwidth]{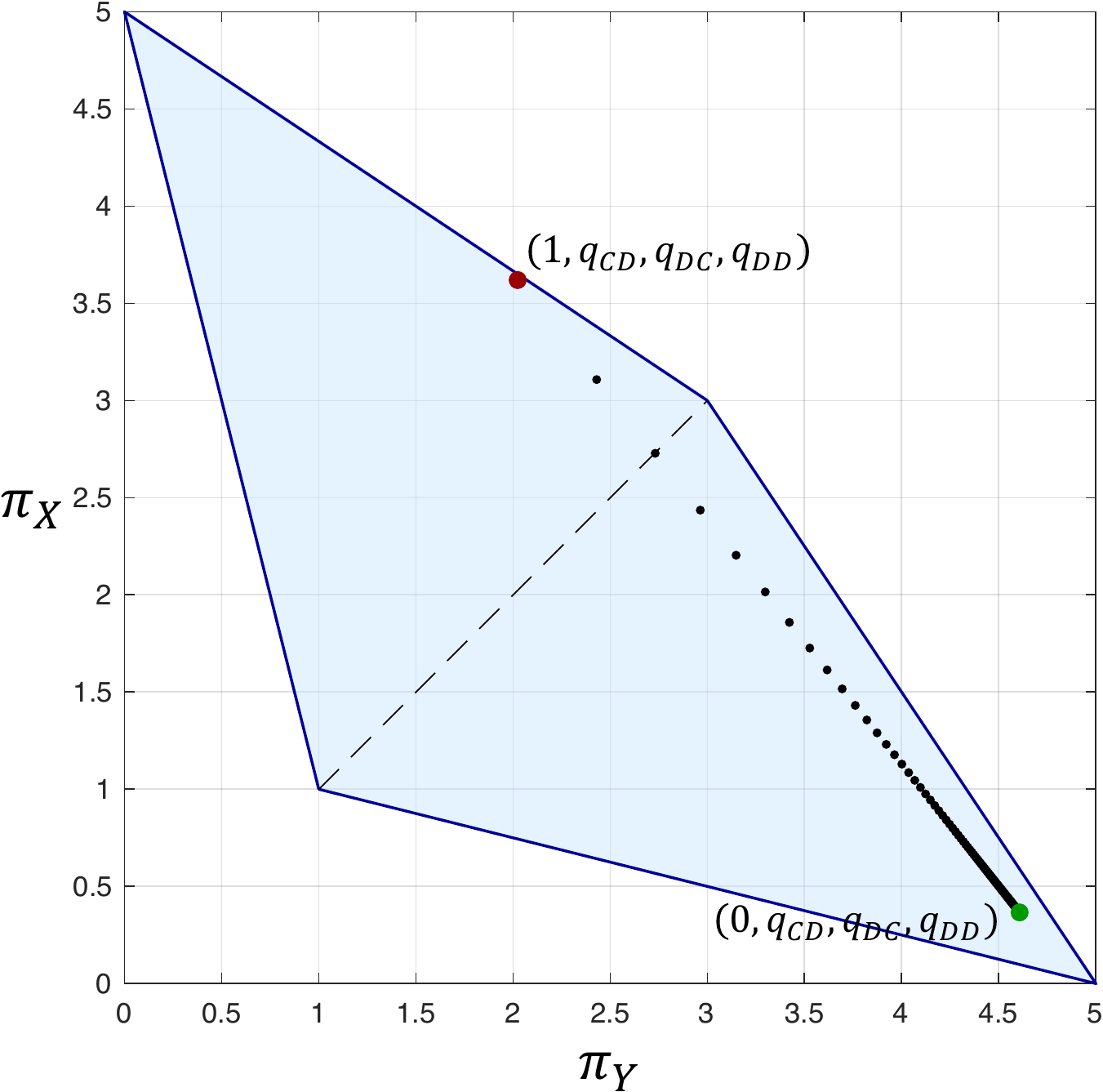}
\caption{\small The set of points $\left(\pi_{Y}\left(\mathbf{p},\mathbf{q}\right) ,\pi_{X}\left(\mathbf{p},\mathbf{q}\right)\right)$, where $\mathbf{p}_{\bullet\bullet}=\left(0.7876,0.9856,0.4095,0.0301\right)$ and $\mathbf{q}_{\bullet\bullet}=\left(q_{CC},0.9963,0.0166,0.9879\right)$ as $q_{CC}$ varies between $0$ (green) and $1$ (red) in uniform increments of $0.01$. The resulting points all fall along a line; however, they are not uniformly distributed even though the distribution of $q_{CC}$ is uniform. Parameters: $R=3$, $S=0$, $T=5$, and $P=1$.\label{fig:nonUniformLine}}
\end{figure}

\begin{proposition}
For any $\mathbf{p}\in\textrm{Mem}_{X}^{1}$ with $\mathbf{p}_{\bullet\bullet}\in\left(0,1\right)^{4}$, $\mathcal{C}\left(\mathbf{p}\right)$ is the convex hull of the following $11$ points:
\begin{subequations}\label{eq:elevenPoints}
\begin{align}
\begin{pmatrix}\pi_{X}^{\left(0,0,0,0\right)} \\ \pi_{Y}^{\left(0,0,0,0\right)}\end{pmatrix} &= \renewcommand*{\arraystretch}{1.3}\begin{pmatrix}\frac{P - Pp_{CD} + Sp_{DD}}{p_{DD} - p_{CD} + 1} \\ \frac{P - Pp_{CD} + Tp_{DD}}{p_{DD} - p_{CD} + 1}\end{pmatrix} ; \\
\begin{pmatrix}\pi_{X}^{\left(0,0,0,1\right)} \\ \pi_{Y}^{\left(0,0,0,1\right)}\end{pmatrix} &= \renewcommand*{\arraystretch}{1.3}\begin{pmatrix}\frac{P + T - Pp_{CD} + Rp_{DD} + Sp_{DC} - Tp_{CD} - Tp_{DD} - Rp_{CD}p_{DD} + Sp_{CC}p_{DD} - Sp_{DC}p_{DD} + Tp_{CD}p_{DD}}{p_{DC} - 2p_{CD} + p_{CC}p_{DD} - p_{DC}p_{DD} + 2} \\ \frac{P + S - Pp_{CD} + Rp_{DD} - Sp_{CD} - Sp_{DD} + Tp_{DC} - Rp_{CD}p_{DD} + Sp_{CD}p_{DD} + Tp_{CC}p_{DD} - Tp_{DC}p_{DD}}{p_{DC} - 2p_{CD} + p_{CC}p_{DD} - p_{DC}p_{DD} + 2}\end{pmatrix} ; \\
\begin{pmatrix}\pi_{X}^{\left(0,0,1,0\right)} \\ \pi_{Y}^{\left(0,0,1,0\right)}\end{pmatrix} &= \renewcommand*{\arraystretch}{1.3}\begin{pmatrix}\frac{P - Pp_{DC} + Sp_{DD} + Tp_{DD} - Pp_{CC}p_{CD} + Pp_{CD}p_{DC} + Rp_{CD}p_{DD} - Tp_{CD}p_{DD}}{2p_{DD} - p_{DC} - p_{CC}p_{CD} + p_{CD}p_{DC} + 1} \\ \frac{P - Pp_{DC} + Sp_{DD} + Tp_{DD} - Pp_{CC}p_{CD} + Pp_{CD}p_{DC} + Rp_{CD}p_{DD} - Sp_{CD}p_{DD}}{2p_{DD} - p_{DC} - p_{CC}p_{CD} + p_{CD}p_{DC} + 1}\end{pmatrix} ; \\
\begin{pmatrix}\pi_{X}^{\left(0,0,1,1\right)} \\ \pi_{Y}^{\left(0,0,1,1\right)}\end{pmatrix} &= \renewcommand*{\arraystretch}{1.8}\begin{pmatrix}\frac{\substack{P + T - Pp_{DC} + Rp_{DD} + Sp_{DC} - Tp_{DD} - Pp_{CC}p_{CD} + Pp_{CD}p_{DC} + Rp_{CD}p_{DC} \\ - Rp_{DC}p_{DD} + Sp_{CC}p_{DD} - Sp_{DC}p_{DD} - Tp_{CC}p_{CD} + Tp_{CC}p_{DD}}}{2\left(p_{CC}p_{DD} - p_{CC}p_{CD} + p_{CD}p_{DC} - p_{DC}p_{DD} + 1\right)} \\
\frac{\substack{P + S - Pp_{DC} + Rp_{DD} - Sp_{DD} + Tp_{DC} - Pp_{CC}p_{CD} + Pp_{CD}p_{DC} + Rp_{CD}p_{DC} \\ - Rp_{DC}p_{DD} - Sp_{CC}p_{CD} + Sp_{CC}p_{DD} + Tp_{CC}p_{DD} - Tp_{DC}p_{DD}}}{2\left(p_{CC}p_{DD} - p_{CC}p_{CD} + p_{CD}p_{DC} - p_{DC}p_{DD} + 1\right)}\end{pmatrix} ; \\
\begin{pmatrix}\pi_{X}^{\left(0,1,0,1\right)} \\ \pi_{Y}^{\left(0,1,0,1\right)}\end{pmatrix} &= \renewcommand*{\arraystretch}{1.3}\begin{pmatrix}\frac{T + Pp_{DC} + Rp_{DC} - Tp_{CD} - Tp_{DD} - Pp_{CD}p_{DC} - Rp_{CD}p_{DC} + Sp_{CC}p_{DC} + Tp_{CD}p_{DD}}{2p_{DC} - p_{CD} - p_{DD} + p_{CC}p_{DC} - 2p_{CD}p_{DC} + p_{CD}p_{DD} + 1} \\ \frac{S + Pp_{DC} + Rp_{DC} - Sp_{CD} - Sp_{DD} - Pp_{CD}p_{DC} - Rp_{CD}p_{DC} + Sp_{CD}p_{DD} + Tp_{CC}p_{DC}}{2p_{DC} - p_{CD} - p_{DD} + p_{CC}p_{DC} - 2p_{CD}p_{DC} + p_{CD}p_{DD} + 1}\end{pmatrix} ; \\
\begin{pmatrix}\pi_{X}^{\left(0,1,1,0\right)} \\ \pi_{Y}^{\left(0,1,1,0\right)}\end{pmatrix} &= \renewcommand*{\arraystretch}{1.3}\begin{pmatrix}\frac{Pp_{DC} + Tp_{DD} - Pp_{CC}p_{DC} + Rp_{DC}p_{DD} + Sp_{DC}p_{DD} - Tp_{CD}p_{DD}}{p_{DC} + p_{DD} - p_{CC}p_{DC} - p_{CD}p_{DD} + 2p_{DC}p_{DD}} \\ \frac{Pp_{DC} + Sp_{DD} - Pp_{CC}p_{DC} + Rp_{DC}p_{DD} - Sp_{CD}p_{DD} + Tp_{DC}p_{DD}}{p_{DC} + p_{DD} - p_{CC}p_{DC} - p_{CD}p_{DD} + 2p_{DC}p_{DD}}\end{pmatrix} ; \\
\begin{pmatrix}\pi_{X}^{\left(0,1,1,1\right)} \\ \pi_{Y}^{\left(0,1,1,1\right)}\end{pmatrix} &= \renewcommand*{\arraystretch}{1.3}\begin{pmatrix}\frac{T + Pp_{DC} + Rp_{DC} - Tp_{DD} - Pp_{CC}p_{DC} + Sp_{CC}p_{DC} - Tp_{CC}p_{CD} + Tp_{CC}p_{DD}}{2p_{DC} - p_{DD} - p_{CC}p_{CD} + p_{CC}p_{DD} + 1} \\ \frac{S + Pp_{DC} + Rp_{DC} - Sp_{DD} - Pp_{CC}p_{DC} - Sp_{CC}p_{CD} + Sp_{CC}p_{DD} + Tp_{CC}p_{DC}}{2p_{DC} - p_{DD} - p_{CC}p_{CD} + p_{CC}p_{DD} + 1}\end{pmatrix} ; \\
\begin{pmatrix}\pi_{X}^{\left(1,0,0,1\right)} \\ \pi_{Y}^{\left(1,0,0,1\right)}\end{pmatrix} &= \renewcommand*{\arraystretch}{1.3}\begin{pmatrix}-\frac{P + T - Pp_{CC} - Pp_{CD} + Rp_{DD} + Sp_{DC} - Tp_{CC} - Tp_{CD} + Pp_{CC}p_{CD} - Rp_{CD}p_{DD} - Sp_{CC}p_{DC} + Tp_{CC}p_{CD}}{2p_{CC} + 2p_{CD} - p_{DC} - p_{DD} - 2p_{CC}p_{CD} + p_{CC}p_{DC} + p_{CD}p_{DD} - 2} \\ -\frac{P + S - Pp_{CC} - Pp_{CD} + Rp_{DD} - Sp_{CC} - Sp_{CD} + Tp_{DC} + Pp_{CC}p_{CD} - Rp_{CD}p_{DD} + Sp_{CC}p_{CD} - Tp_{CC}p_{DC}}{2p_{CC} + 2p_{CD} - p_{DC} - p_{DD} - 2p_{CC}p_{CD} + p_{CC}p_{DC} + p_{CD}p_{DD} - 2}\end{pmatrix} ; \\
\begin{pmatrix}\pi_{X}^{\left(1,0,1,0\right)} \\ \pi_{Y}^{\left(1,0,1,0\right)}\end{pmatrix} &= \renewcommand*{\arraystretch}{1.3}\begin{pmatrix}\frac{P - Pp_{CC} - Pp_{DC} + Sp_{DD} + Tp_{DD} + Pp_{CC}p_{DC} + Rp_{CD}p_{DD} - Sp_{CC}p_{DD} - Tp_{CC}p_{DD}}{2p_{DD} - p_{DC} - p_{CC} + p_{CC}p_{DC} - 2p_{CC}p_{DD} + p_{CD}p_{DD} + 1} \\ \frac{P - Pp_{CC} - Pp_{DC} + Sp_{DD} + Tp_{DD} + Pp_{CC}p_{DC} + Rp_{CD}p_{DD} - Sp_{CC}p_{DD} - Tp_{CC}p_{DD}}{2p_{DD} - p_{DC} - p_{CC} + p_{CC}p_{DC} - 2p_{CC}p_{DD} + p_{CD}p_{DD} + 1}\end{pmatrix} ; \\
\begin{pmatrix}\pi_{X}^{\left(1,0,1,1\right)} \\ \pi_{Y}^{\left(1,0,1,1\right)}\end{pmatrix} &= \renewcommand*{\arraystretch}{1.3}\begin{pmatrix}\frac{P + T - Pp_{CC} - Pp_{DC} + Rp_{DD} + Sp_{DC} - Tp_{CC} + Pp_{CC}p_{DC} + Rp_{CD}p_{DC} - Rp_{DC}p_{DD} - Sp_{CC}p_{DC}}{p_{DD} - 2p_{CC} + p_{CD}p_{DC} - p_{DC}p_{DD} + 2} \\ \frac{P + S - Pp_{CC} - Pp_{DC} + Rp_{DD} - Sp_{CC} + Tp_{DC} + Pp_{CC}p_{DC} + Rp_{CD}p_{DC} - Rp_{DC}p_{DD} - Tp_{CC}p_{DC}}{p_{DD} - 2p_{CC} + p_{CD}p_{DC} - p_{DC}p_{DD} + 2}\end{pmatrix} ; \\
\begin{pmatrix}\pi_{X}^{\left(1,1,1,1\right)} \\ \pi_{Y}^{\left(1,1,1,1\right)}\end{pmatrix} &= \renewcommand*{\arraystretch}{1.3}\begin{pmatrix}\frac{T + Rp_{DC} - Tp_{CC}}{p_{DC} - p_{CC} + 1} \\ \frac{S + Rp_{DC} - Sp_{CC}}{p_{DC} - p_{CC} + 1}\end{pmatrix} .
\end{align}
\end{subequations}
\end{proposition}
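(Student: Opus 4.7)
The plan is to establish both inclusions in $\mathcal{C}(\mathbf{p})=\mathrm{conv}\{\text{eleven listed points}\}$. For ``$\subseteq$'' I will apply Lemma~\ref{lem:oneCoordinateLinear} iteratively along the four response coordinates $q_{CC},q_{CD},q_{DC},q_{DD}$ of $\mathbf{q}$. One application writes $\pi(\mathbf{p},\mathbf{q})$ as a convex combination of two payoffs with $q_{CC}\in\{0,1\}$; applying the lemma in $q_{CD}$ to each of these endpoints produces a convex combination of four payoffs with $(q_{CC},q_{CD})\in\{0,1\}^{2}$; and two further iterations in $q_{DC},q_{DD}$ yield a convex combination of the sixteen ``vertex'' payoffs $\pi(\mathbf{p},\mathbf{q}^{*})$ with $\mathbf{q}^{*}_{\bullet\bullet}\in\{0,1\}^{4}$. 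Hence $\mathcal{C}(\mathbf{p})$ is contained in the convex hull of these sixteen vertex payoffs.

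The substantive step is then to show that five of these sixteen vertices are redundant, in the sense that their payoffs already lie in the convex hull of the eleven listed in the statement. For each of the missing strategies $(0,1,0,0)$, $(1,0,0,0)$, $(1,1,0,0)$, $(1,1,0,1)$, $(1,1,1,0)$ I would inspect the reducibility structure of the Markov chain in \eq{memOneMatrix}. Because $\mathbf{p}_{\bullet\bullet}\in(0,1)^{4}$ keeps every $X$-transition fully stochastic, the recurrent portion of the chain in each case collapses to one of the two trivial classes: $\{CD,DD\}$, on which $Y$ effectively defects forever (giving payoff $\pi^{(0,0,0,0)}$), or $\{CC,DC\}$, on which $Y$ effectively cooperates forever (giving payoff $\pi^{(1,1,1,1)}$). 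A direct case-by-case check shows that $(1,0,0,0)$ and $(0,1,0,0)$ both absorb into $\{CD,DD\}$; $(1,1,0,1)$ and $(1,1,1,0)$ both absorb into $\{CC,DC\}$; and $(1,1,0,0)$ has \emph{both} as absorbing classes, with absorption probability $q_{0}$ into $\{CC,DC\}$, so its payoff is the convex combination $q_{0}\pi^{(1,1,1,1)}+(1-q_{0})\pi^{(0,0,0,0)}$. In each of the five cases the payoff lies on the segment joining $\pi^{(0,0,0,0)}$ and $\pi^{(1,1,1,1)}$, both of which are among the eleven retained vertices.

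For the reverse inclusion, each of the eleven strategies lies in $\textbf{Mem}_{X}^{1}$, so the corresponding payoff pairs are in $\mathcal{C}(\mathbf{p})$; and $\mathcal{C}(\mathbf{p})$ is itself convex, since by \citet{press:PNAS:2012} the feasible payoffs against a memory-one $\mathbf{p}$ are the same whether $Y$ is restricted to memory-one strategies or allowed to use arbitrary strategies, and the latter set is closed under taking convex combinations (by the ``flip a coin once at the start and then play $\mathbf{q}_{1}$ or $\mathbf{q}_{2}$ forever'' mixture). This gives $\mathrm{conv}\{\text{eleven points}\}\subseteq\mathcal{C}(\mathbf{p})$, and together with the first two paragraphs yields equality. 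The closed-form coordinates in \eq{elevenPoints} are then obtained by substituting each vertex strategy $\mathbf{q}^{*}$ into the stationary-distribution computation used in the proof of Lemma~\ref{lem:oneCoordinateLinear} (or equivalently into the determinant formula of \citet{press:PNAS:2012}) and forming the inner products with $(R,S,T,P)$ and $(R,T,S,P)$.

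The hard part will be the reducibility analysis in the second paragraph, and in particular the case $(1,1,0,0)$: here the chain has two disjoint absorbing classes, and one has to track the $q_{0}$-dependent absorption probabilities to verify that the induced payoff genuinely slides along the segment between $\pi^{(0,0,0,0)}$ and $\pi^{(1,1,1,1)}$ rather than escaping it. The other four reducible vertices each have a unique recurrent class and collapse directly onto a listed vertex, so the crux is essentially the Markov-chain bookkeeping needed to identify exactly which five corners of the hypercube $\{0,1\}^{4}$ are redundant and to check that each of those five collapses onto the correct segment.
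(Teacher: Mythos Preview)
Your approach is essentially the paper's: iterate Lemma~\ref{lem:oneCoordinateLinear} over the four coordinates to land in the convex hull of the sixteen pure vertices, then prune the five redundant ones by the same recurrent-class identifications (the paper states the coincidences tersely, while you spell out the absorbing-set analysis, but the content is identical). The one substantive item you omit is the paper's minimality check: it exhibits, for each of the eleven retained vertices, an explicit $(R,S,T,P,\mathbf{p})$ for which that vertex is extreme (Table~\ref{tab:11points}), so that no proper subset of the eleven suffices in general; conversely, you supply an explicit convexity argument for the reverse inclusion (coin-flip mixtures plus Press--Dyson) that the paper leaves implicit.
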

\begin{proof}
\citet{press:PNAS:2012} show that if $X$ uses a memory-one strategy, $\mathbf{p}$, then any strategy of the opponent, $\mathbf{y}$, can be replaced by a memory-one strategy, $\mathbf{q}$, without changing the payoffs to $X$ and $Y$; thus, if $X$ uses a memory-one strategy, one may assume without a loss of generality that $Y$ also uses a memory-one strategy. If $\mathbf{p}_{\bullet\bullet}\in\left(0,1\right)^{4}$ and $\mathbf{q}\in\textbf{Mem}_{X}^{1}$, the fact that $\left(\pi_{Y}\left(\mathbf{p},\mathbf{q}\right) ,\pi_{X}\left(\mathbf{p},\mathbf{q}\right)\right)$ can be written as a convex combination of the $16$ points $\left\{\left(\pi_{Y}\left(\mathbf{p},\mathbf{q}'\right) ,\pi_{X}\left(\mathbf{p},\mathbf{q}'\right)\right)\right\}_{\mathbf{q}_{\bullet\bullet}'\in\left\{0,1\right\}^{4}}$ then follows immediately from Lemma~\ref{lem:oneCoordinateLinear}. Moreover, the points corresponding to $\left(0,0,0,0\right)$, $\left(0,1,0,0\right)$, and $\left(1,0,0,0\right)$ are the same, as are the points corresponding to $\left(1,1,0,1\right)$, $\left(1,1,1,0\right)$, and $\left(1,1,1,1\right)$; thus, we can eliminate four points. Furthermore, we can remove the point associated to $\left(1,1,0,0\right)$ because it lies on the line connecting the points associated to $\left(0,0,0,0\right)$ and $\left(1,1,1,1\right)$. One can easily check that the remaining $11$ points have the following property: if point $i$ is removed, then there exist $R,S,T,P$ and $\mathbf{p}$ for which $\mathcal{C}\left(\mathbf{p}\right)$ is not the convex hull of the $10$ points different from $i$ (\tab{11points}). Thus, for a general $\mathbf{p}$ and payoff matrix, all $11$ of these points are required.
\begin{table}
\centering
\bgroup
\def\arraystretch{1.3}
\begin{tabular}{|c|c|c|}
\hline
\textbf{\textit{point}} & $\begin{pmatrix}\bm{R} & \bm{S} \\ \bm{T} & \bm{P}\end{pmatrix}$ & $\mathbf{p}_{\bullet\bullet}$ \\
\hhline{|=|=|=|}
$\bm{\pi_{X,Y}^{\left(0,0,0,0\right)}}$ & $\begin{pmatrix}4.5953 & -3.5001 \\ -0.1798 & 4.4972\end{pmatrix}$ & $\left(0.0347,0.8913,0.9873,0.1164\right)$ \\
\hline
$\bm{\pi_{X,Y}^{\left(0,0,0,1\right)}}$ & $\begin{pmatrix}3.5909 & 3.7183 \\ 3.1091 & 2.6508\end{pmatrix}$ & $\left(0.3420,0.5591,0.0468,0.9941\right)$ \\
\hline
$\bm{\pi_{X,Y}^{\left(0,0,1,0\right)}}$ & $\begin{pmatrix}0.1150 & 1.2677 \\ -2.8725 & 1.4290\end{pmatrix}$ & $\left(0.8937,0.9211,0.6995,0.0052\right)$ \\
\hline
$\bm{\pi_{X,Y}^{\left(0,0,1,1\right)}}$ & $\begin{pmatrix}-0.1523 & 1.7642 \\ -3.3334 & -3.9907\end{pmatrix}$ & $\left(0.5319,0.4107,0.9805,0.0823\right)$ \\
\hline
$\bm{\pi_{X,Y}^{\left(0,1,0,1\right)}}$ & $\begin{pmatrix}2.1084 & 0.4235 \\ 4.5449 & -4.5716\end{pmatrix}$ & $\left(0.3897,0.6428,0.2422,0.0300\right)$ \\
\hline
$\bm{\pi_{X,Y}^{\left(0,1,1,0\right)}}$ & $\begin{pmatrix}2.5627 & -2.5701 \\ -4.1353 & 4.0437\end{pmatrix}$ & $\left(0.7502,0.7603,0.9999,0.3161\right)$ \\
\hline
$\bm{\pi_{X,Y}^{\left(0,1,1,1\right)}}$ & $\begin{pmatrix}0.0600 & 1.1524 \\ 2.8660 & 1.3631\end{pmatrix}$ & $\left(0.1145,0.9494,0.7587,0.9214\right)$ \\
\hline
$\bm{\pi_{X,Y}^{\left(1,0,0,1\right)}}$ & $\begin{pmatrix}-4.4025 & 1.6813 \\ -2.9162 & 1.1664\end{pmatrix}$ & $\left(0.9629,0.0020,0.2554,0.8444\right)$ \\
\hline
$\bm{\pi_{X,Y}^{\left(1,0,1,0\right)}}$ & $\begin{pmatrix}0.1167 & 2.5125 \\ -0.3462 & -4.6919\end{pmatrix}$ & $\left(0.4121,0.4373,0.5380,0.8915\right)$ \\
\hline
$\bm{\pi_{X,Y}^{\left(1,0,1,1\right)}}$ & $\begin{pmatrix}-0.3787 & 1.1357 \\ 1.5417 & 2.7617\end{pmatrix}$ & $\left(0.2570,0.5191,0.1293,0.9332\right)$ \\
\hline
$\bm{\pi_{X,Y}^{\left(1,1,1,1\right)}}$ & $\begin{pmatrix}-1.8211 & -3.2300 \\ -4.6281 & -0.4609\end{pmatrix}$ & $\left(0.0009,0.4996,0.4362,0.9653\right)$ \\
\hline
\end{tabular}
\egroup
\caption{For each point, $\bm{\pi_{X,Y}^{\left(i_{1},i_{2},i_{3},i_{3}\right)}}$, the feasible region $\mathcal{C}\left(\mathbf{p}\right)$ cannot (in general) be expressed as the convex hull of the remaining $10$ points different from $\bm{\pi_{X,Y}^{\left(i_{1},i_{2},i_{3},i_{3}\right)}}$. That is, each row gives \textit{(i)} one of the $11$ points of which $\mathcal{C}$ is the convex hull and \textit{(ii)} an example of a game-strategy pair for which $\bm{\pi_{X,Y}^{\left(i_{1},i_{2},i_{3},i_{3}\right)}}$ is an extreme point of $\mathcal{C}\left(\mathbf{p}\right)$.\label{tab:11points}}
\end{table}
\end{proof}

\begin{remark}
$\mathbf{p}$ enforces a linear payoff relationship if and only if these $11$ points are collinear.
\end{remark}

\begin{remark}
One needs all $11$ of these points for general $R,S,T,P$ and $\mathbf{p}$. However, for any particular game-strategy pair, it is often the case that several of these points are unnecessary because they lie within the convex hull of some other subset of these $11$ points; they are typically not all extreme points of $\mathcal{C}\left(\mathbf{p}\right)$.
\end{remark}

\section{Reactive learning strategies}
In a traditional memory-one strategy, $X$'s probability of playing $C$ depends on the realized actions of the two players, $x$ and $y$. However, $X$ can observe more than just their pure action against the opponent's; they also know how they arrived at $x$ (i.e. they know the mixed action, $\sigma_{X}$, that resulted in $x$ in the previous round). Of course, $X$ need not be able to see $Y$'s mixed action, but they can still observe the pure action $Y$ played. Therefore, an alternative notion of a memory-one strategy for player $X$ could be defined as follows: after $X$ plays $\sigma_{X}\in\left[0,1\right]$ and $Y$ plays $y$, $X$ then chooses a new action based on the distribution $p_{\sigma_{X}y}^{\ast}\in\left[0,1\right]$. In this formulation, $p^{\ast}$ is a map from $\left[0,1\right]\times\left\{C,D\right\}$ to $\left[0,1\right]$. We refer to such a map, $p^{\ast}$, together with $X$'s initial probability of playing $C$, $p_{0}$, as a ``reactive learning strategy" for player $X$ (\fig{strategySpaces}).

\begin{figure}
\centering
\includegraphics[width=0.7\textwidth]{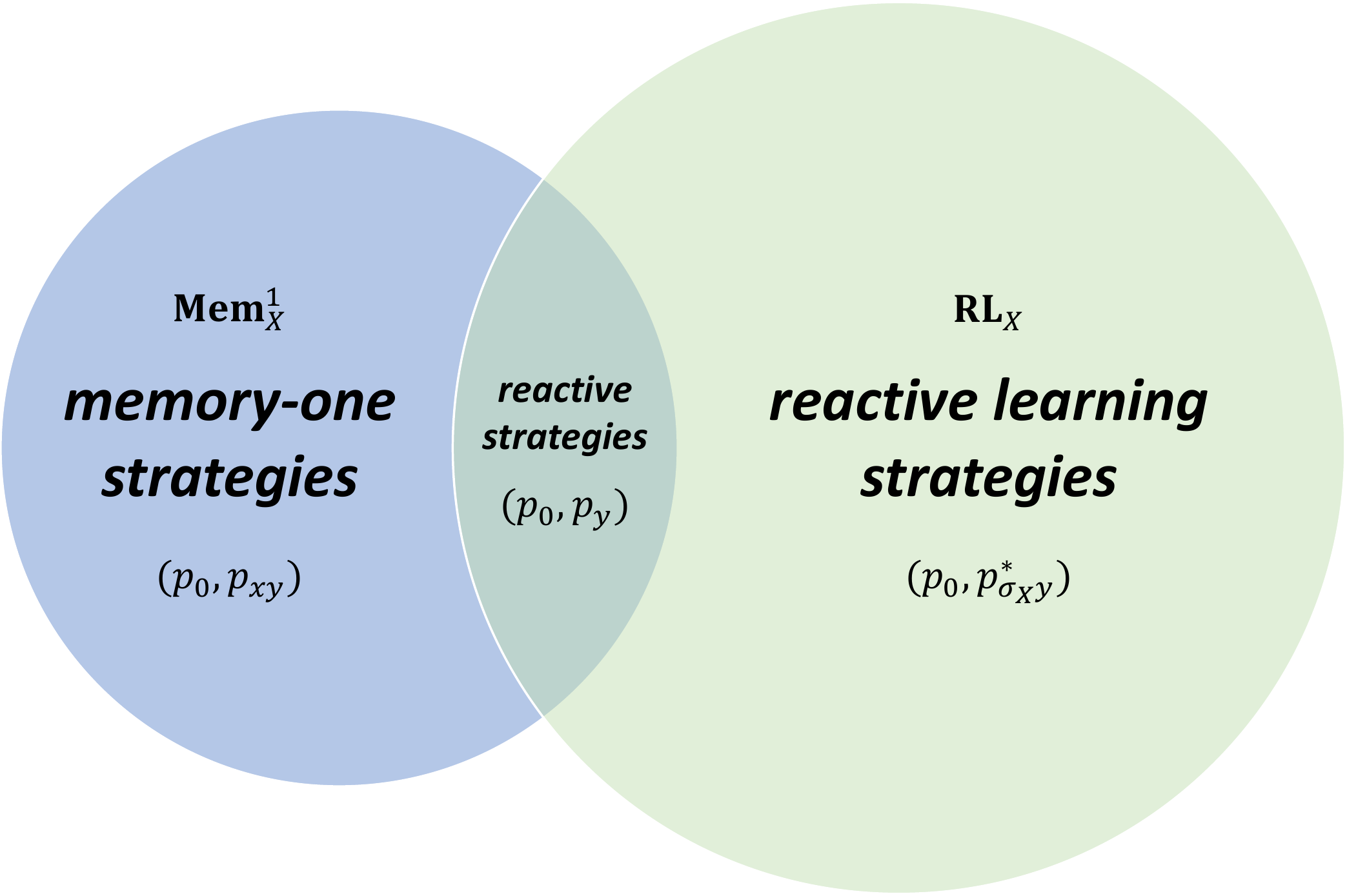}
\caption{\small The space of memory-one strategies, $\textbf{Mem}_{X}^{1}$, as it relates to the space of reactive learning strategies, $\mathbf{RL}_{X}$. Both sets contain the space of reactive strategies \citep{nowak:AAM:1990}, which take into account only the last move, $y$, of the opponent. Whereas a memory-one strategy takes into account the last pure action of $X$ as well, $x$, a reactive learning strategy uses $X$'s last \textit{mixed} action, $\sigma_{X}\in\left[0,1\right]$. After each round, a reactive learning strategy uses $y$ to update $X$'s probability of cooperating. $\textbf{RL}_{X}$ is ``larger" than $\textbf{Mem}_{X}^{1}$ in the sense that there is an injective map $\textbf{Mem}_{X}^{1}\rightarrow\textbf{RL}_{X}$ that is not surjective.\label{fig:strategySpaces}}
\end{figure}

In other words, in contrast to $\textbf{Mem}_{X}^{1}=\left[0,1\right]\times\left[0,1\right]^{4}$, which can be alternatively described as
\begin{align}
\textbf{Mem}_{X}^{1} &= \left[0,1\right]\times\Big\{ p : \left\{C,D\right\}\times\left\{C,D\right\} \rightarrow \left[0,1\right] \Big\} ,
\end{align}
we define the space of reactive learning strategies as
\begin{align}
\textbf{RL}_{X} &\coloneqq \left[0,1\right]\times\Big\{ p^{\ast} : \left[0,1\right]\times\left\{C,D\right\} \rightarrow \left[0,1\right] \Big\} ,
\end{align}
where $\left[0,1\right]$ indicates the space of mixed actions for $X$ and $\left\{C,D\right\}$ indicates the action space for $Y$. Although $\left[0,1\right]$ is a much larger space than $\left\{C,D\right\}$, the updates of mixed actions can be easier to specify using reactive learning strategies since they allow for adaptive modification of an existing mixed action (without the need to devise a new mixed action from scratch after every observed history of play).

\begin{example}\label{ex:eta}
Suppose that player $X$ starts by playing $C$ and $D$ with equal probability, i.e. $p_{0}=1/2$. For fixed $\eta\in\left[0,1\right]$ (the ``learning rate"), cooperation from the opponent leads to $p_{\sigma_{X} C}^{\ast}=\left(1-\eta\right)\sigma_{X}+\eta$ while defection leads to $p_{\sigma_{X} D}^{\ast}=\left(1-\eta\right)\sigma_{X}$. Thus, a long pattern of exploitation by $Y$ leads $X$ to defect more often. On the other hand, $X$ does not immediately forgive such behavior but rather requires $Y$ to cooperate repeatedly to bring $X$ back up to higher levels of cooperation. For example, if $X$ starts with $p_{0}$ and $Y$ defects $\ell$ times in a row, then $X$ subsequently cooperates with probability $\left(1-\eta\right)^{\ell}p_{0}$. In order to bring $X$'s probability of cooperation above $p_{0}$ once again, $Y$ must then cooperate for $T$ rounds, where
\begin{align}
T &\geqslant \frac{\log\left(\frac{1-p_{0}}{1-\left(1-\eta\right)^{\ell}p_{0}}\right)}{\log\left(1-\eta\right)} .
\end{align}
We refer to this strategy as ``learning tit-for-tat (LTFT)" because it pushes a player's cooperation probability in the direction of the opponent's last move (see \fig{LTFT}). In this way, a reactive learning strategy can encode more complicated behavior than a memory-one strategy. Conversely, memory-one strategies can also encode behavior not captured by reactive learning strategies, which we discuss further in \S\ref{sec:feasibleRL}.
\end{example}

\begin{figure}
\centering
\includegraphics[width=0.5\textwidth]{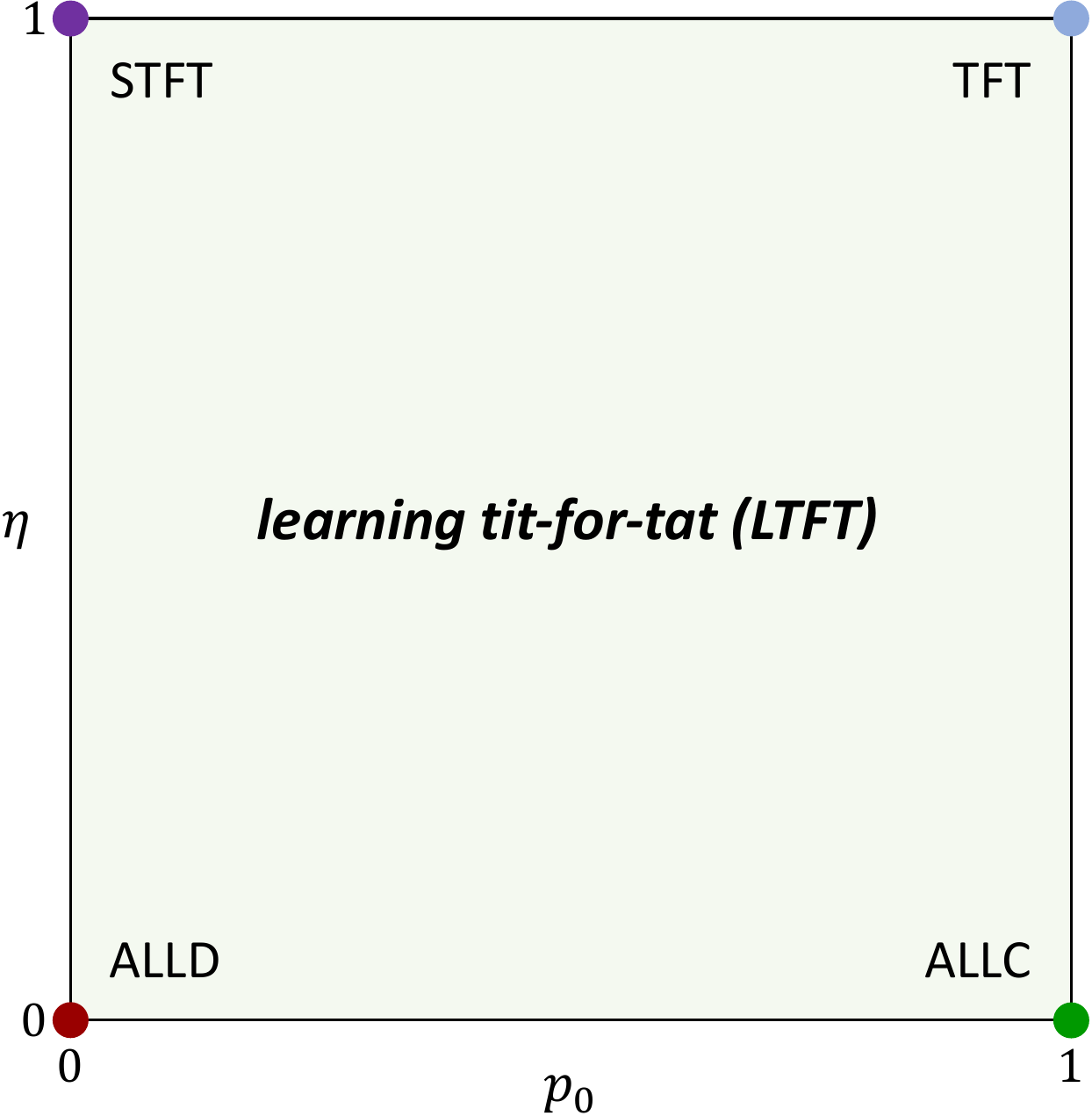}
\caption{\small ``Learning tit-for-tat (LTFT)," an analogue of tit-for-tat (TFT) within the space of reactive learning strategies. LTFT is the function of two parameters, $p_{0}$ (the initial mixed action) and $\eta$ (the learning rate). Player $X$ initially plays $C$ with probability $p_{0}$. In all subsequent rounds, if $X$ played $C$ with probability $\sigma_{X}$ and $Y$ played $C$ (resp. $D$) in the previous round, in the next round $X$ plays $C$ with probability $p_{\sigma_{X}C}^{\ast}=\left(1-\eta\right)\sigma_{X}+\eta$ (resp. $p_{\sigma_{X}D}^{\ast}=\left(1-\eta\right)\sigma_{X}$). At the corners lie the strategies ALLD (always defect), ALLC (always cooperate), TFT (tit-for-tat), and STFT (suspicious tit-for-tat).\label{fig:LTFT}}
\end{figure}

\subsection{Linear reactive learning strategies}
A pertinent question at this point is whether there is a ``natural" map from $\textbf{Mem}_{X}^{1}$ to $\textbf{RL}_{X}$. Let $\left(p_{0},\mathbf{p}_{\bullet\bullet}\right) =\left(p_{0},p_{CC},p_{CD},p_{DC},p_{DD}\right)$ be a memory-one strategy. If $\left(p_{0}',p^{\ast}\right)$ is the corresponding reactive learning strategy, then the first requirement we impose is $p_{0}'=p_{0}$. If $\sigma_{X}=1$, then $X$ plays $C$ with probability one. It is therefore reasonable to insist that $p_{1y}^{\ast}=p_{Cy}$. Similarly, $X$ plays $D$ with probability one when $\sigma_{X}=0$, and we insist that $p_{0y}^{\ast}=p_{Dy}$. Suppose now that $\sigma_{X}$ and $\sigma_{X}'$ are two mixed actions for $X$. If $Y$ plays $y\in\left\{C,D\right\}$, then the responses for $X$ corresponding to $\sigma_{X}$ and $\sigma_{X}'$ are $p_{\sigma_{X}y}^{\ast}$ and $p_{\sigma_{X}'y}^{\ast}$, respectively. If $X$ plays $\sigma_{X}$ with probability $w\in\left[0,1\right]$ and $\sigma_{X}'$ with probability $1-w$, then it is also natural to insist that the response is $p_{\sigma_{X}y}^{\ast}$ with probability $w$ and $p_{\sigma_{X}'y}^{\ast}$ with probability $1-w$. Thus, for any $\sigma_{X}\in\left[0,1\right]$ and $y\in\left\{C,D\right\}$, with these requirements $p^{\ast}$ can be written uniquely in terms of $\mathbf{p}_{\bullet\bullet}$ as
\begin{align}
p_{\sigma_{X}y}^{\ast} = \sigma_{X} p_{1y}^{\ast} + \left(1-\sigma_{X}\right) p_{0y}^{\ast} = \sigma_{X} p_{Cy} + \left(1-\sigma_{X}\right) p_{Dy} .
\end{align}
Using this map, one can naturally identify $\textbf{Mem}_{X}^{1}$ with the set of \textit{linear} reactive learning strategies, $\textbf{LRL}_{X}\subseteq\textbf{RL}_{X}$, consisting of those functions $p^{\ast}:\left[0,1\right]\times\left\{C,D\right\}\rightarrow\left[0,1\right]$ for which there exist $a,b,c,d\in\mathbb{R}$ with
\begin{subequations}
\begin{align}
p_{\sigma_{X}C}^{\ast} &= \sigma_{X}a + \left(1-\sigma_{X}\right) c ; \\
p_{\sigma_{X}D}^{\ast} &= \sigma_{X}b + \left(1-\sigma_{X}\right) d .
\end{align}
\end{subequations}
Clearly, any such $a,b,c,d$ must lie in $\left[0,1\right]$ since $p_{\sigma_{X}y}^{\ast}\in\left[0,1\right]$ for every $\sigma_{X}\in\left[0,1\right]$ and $y\in\left\{C,D\right\}$.

Under this correspondence, the strategy of Example~\ref{ex:eta} has parameters $\left(1/2,1,1-\eta ,\eta, 0\right)$. But note that this map, $\textbf{Mem}_{X}^{1}\rightarrow\textbf{RL}_{X}$, is not surjective due to the fact that not every reactive learning strategy is linear. For example, if $\left(a,b,c,d\right)\in\left[0,1\right]^{4}$ and $p^{\ast}\in\textbf{RL}_{X}$ is the quadratic response function defined by
\begin{subequations}
\begin{align}
p_{\sigma_{X}C}^{\ast}\coloneqq\left(\sigma_{X}\right)^{2}a+\left(1-\left(\sigma_{X}\right)^{2}\right) c ; \\
p_{\sigma_{X}D}^{\ast}\coloneqq\left(\sigma_{X}\right)^{2}b+\left(1-\left(\sigma_{X}\right)^{2}\right) d ,
\end{align}
\end{subequations}
then there exists no $\left(p_{CC},p_{CD},p_{DC},p_{DD}\right)\in\left[0,1\right]^{4}$ mapping to $p^{\ast}$ provided $a\neq c$ or $b\neq d$.

\subsection{Stationary distributions}
Suppose that $\left(p_{0},p^{\ast}\right)$ and $\left(q_{0},q^{\ast}\right)$ are reactive learning strategies for $X$ and $Y$, respectively. These strategies generate a Markov chain on the (infinite) space $\left\{C,D\right\}^{2}\times\left[0,1\right]^{2}$ with transition probabilities between $\Big(\left(x,y\right) ,\left(\sigma_{X},\sigma_{Y}\right)\Big) ,\Big(\left(x',y'\right) ,\left(p_{\sigma_{X}y}^{\ast},q_{\sigma_{Y}x}^{\ast}\right)\Big)\in\left\{C,D\right\}^{2}\times\left[0,1\right]^{2}$ given by
\begin{align}
P_{\Big(\left(x,y\right) ,\left(\sigma_{X},\sigma_{Y}\right)\Big)\rightarrow\Big(\left(x',y'\right) ,\left(p_{\sigma_{X}y}^{\ast},q_{\sigma_{Y}x}^{\ast}\right)\Big)} &\coloneqq 
\begin{cases}
p_{\sigma_{X}y}^{\ast} q_{\sigma_{Y}x}^{\ast} & x'=C,\ y'=C , \\
p_{\sigma_{X}y}^{\ast} \left(1-q_{\sigma_{Y}x}^{\ast}\right) & x'=C,\ y'=D , \\
\left(1-p_{\sigma_{X}y}^{\ast}\right) q_{\sigma_{Y}x}^{\ast} & x'=D,\ y'=C , \\
\left(1-p_{\sigma_{X}y}^{\ast}\right) \left(1-q_{\sigma_{Y}x}^{\ast}\right) & x'=D,\ y'=D .
\end{cases}
\end{align}
To simplify notation, we can also denote the right-hand side of this equation by $p_{\sigma_{X}y}^{\ast}\left(x'\right) q_{\sigma_{Y}x}^{\ast}\left(y'\right)$.

If $\nu$ is a stationary distribution of this chain, then, for any $\Big(\left(x,y\right) ,\left(\sigma_{X},\sigma_{Y}\right)\Big)\in\left\{C,D\right\}^{2}\times\left[0,1\right]^{2}$,
\begin{align}
\nu\Big(\left(x,y\right) ,\left(\sigma_{X},\sigma_{Y}\right)\Big) &= \int\limits_{\substack{\Big(\left(x',y'\right) ,\left(\sigma_{X}',\sigma_{Y}'\right)\Big) \\ \left(p_{\sigma_{X}'y'}^{\ast},q_{\sigma_{Y}'x'}^{\ast}\right) =\left(\sigma_{X},\sigma_{Y}\right)}} P_{\Big(\left(x',y'\right) ,\left(\sigma_{X}',\sigma_{Y}'\right)\Big)\rightarrow\Big(\left(x,y\right) ,\left(\sigma_{X},\sigma_{Y}\right)\Big)} \, d\nu\Big(\left(x',y'\right) ,\left(\sigma_{X}',\sigma_{Y}'\right)\Big) \nonumber \\
&= \int\limits_{\substack{\Big(\left(x',y'\right) ,\left(\sigma_{X}',\sigma_{Y}'\right)\Big) \\ \left(p_{\sigma_{X}'y'}^{\ast},q_{\sigma_{Y}'x'}^{\ast}\right) =\left(\sigma_{X},\sigma_{Y}\right)}} \sigma_{X}\left(x\right) \sigma_{Y}\left(y\right) \, d\nu\Big(\left(x',y'\right) ,\left(\sigma_{X}',\sigma_{Y}'\right)\Big) . \label{eq:recurrence}
\end{align}
In general, $\nu$ is difficult to give explicitly. However, it is possible to understand the marginal distributions on $\sigma_{X}$ and $\sigma_{Y}$ in more detail (see \ref{sec:convergenceOfMixedActions}). In any case, having an explicit formula for $\nu$ is not necessary for obtaining our main result on feasible payoff regions, which we turn to in the next section.

\subsection{Feasible payoff regions}\label{sec:feasibleRL}
By looking at the feasible region of a strategy, we uncover a nice relationship between a memory-one strategy, $\mathbf{p}$, and its corresponding (linear) reactive learning strategy, $\mathbf{p}^{\ast}$. Namely, for every $\mathbf{p}\in\textbf{Mem}_{X}^{1}$, we have $\mathcal{C}\left(\mathbf{p}^{\ast}\right)\subseteq\mathcal{C}\left(\mathbf{p}\right)$. In this section, we give a proof of this fact and illustrate some of its consequences.

For $t\geqslant 1$, let $\mathcal{H}_{t}=\left(\left\{C,D\right\}^{2}\right)^{t}$ be the history of play from time $0$ through time $t-1$ \citep{fudenberg:MIT:1991}. When $t=0$, $\mathcal{H}_{0}=\left\{\varnothing\right\}$, where $\varnothing$ denotes the ``empty" history, indicating that no play came before the present encounter. A behavioral strategy for a player specifies, for every possible history of play, a probability of using $C$ in the next encounter. That is, if $\mathcal{H}\coloneqq\sqcup_{t\geqslant 0}\mathcal{H}_{t}$, then a behavioral strategy is a map $\mathcal{H}\rightarrow\left[0,1\right]$. The following lemma shows that when considering the feasible region of a memory-one or reactive learning strategy, one can assume without a loss of generality that the opponent is playing a Markov strategy:

\begin{lemma}\label{lem:markov}
Let $\mathcal{M}\subseteq\mathcal{B}$ be the set of all Markov strategies, i.e.
\begin{align}
\mathcal{M} &\coloneqq \Big\{ \mathbf{y} : \left\{1,2,\dots\right\}\times\left\{C,D\right\}^{2}\rightarrow\left[0,1\right] \Big\} . \label{eq:markovStr}
\end{align}
For any $\mathbf{x}\in\textbf{Mem}_{X}^{1}\cup\textbf{RL}_{X}$, we have $\mathcal{C}\left(\mathbf{x}\right) =\left\{\left(\pi_{Y}\left(\mathbf{x},\mathbf{y}\right) ,\pi_{X}\left(\mathbf{x},\mathbf{y}\right)\right)\right\}_{\mathbf{y}\in\mathcal{M}}$.
\end{lemma}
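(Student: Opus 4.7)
The plan is to establish the nontrivial inclusion $\mathcal{C}(\mathbf{x}) \subseteq \{(\pi_Y(\mathbf{x}, \mathbf{y}), \pi_X(\mathbf{x}, \mathbf{y})) : \mathbf{y} \in \mathcal{M}\}$; the reverse containment is immediate since $\mathcal{M} \subseteq \mathcal{B}$. Given an arbitrary behavioral strategy $\mathbf{y}$ and the joint law $P = P_{(\mathbf{x},\mathbf{y})}$ it induces on play histories, I would construct the candidate Markov replacement by taking
\[
\mathbf{y}'(t, x, y) \;:=\; P\!\bigl(y_{t} = C \,\big|\, x_{t-1}=x,\ y_{t-1}=y\bigr)
\]
for each $t \geq 1$ and $(x,y) \in \{C,D\}^{2}$ (defined arbitrarily on null events), paired with $q_0 := P(y_0 = C)$. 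Then $\mathbf{y}' \in \mathcal{M}$ by construction, and the task is to show that it induces the same payoffs as $\mathbf{y}$ when played against $\mathbf{x}$.

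For the memory-one case $\mathbf{x} \in \textbf{Mem}_X^{1}$, I would prove by induction on $t$ that the time-$t$ marginal $\mu_t$ of $(x_t, y_t)$ coincides under $(\mathbf{x}, \mathbf{y})$ and $(\mathbf{x}, \mathbf{y}')$. The base case is immediate from matching initial conditions. For the inductive step, three facts combine: conditional independence $x_t \perp y_t \mid h_{t-1}$ from simultaneous moves; the memory-one property $P(x_t = C \mid h_{t-1}) = p_{x_{t-1} y_{t-1}}$, depending only on the previous action pair; and the defining identity $E[\mathbf{y}(h_{t-1}) \mid x_{t-1}=x, y_{t-1}=y] = \mathbf{y}'(t,x,y)$. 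These make the one-step transition kernel on $\{C,D\}^2$ identical under $\mathbf{y}$ and $\mathbf{y}'$; combined with the inductive hypothesis on $\mu_{t-1}$, this forces $\mu_t$ to agree, and hence $\pi_X, \pi_Y$ agree. This is essentially the Press--Dyson reduction already invoked in the proof of the preceding proposition.

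For the reactive-learning case $\mathbf{x} \in \textbf{RL}_X$, the main obstacle is that $P(x_t = C \mid h_{t-1})$ equals $\sigma_X^t$, a deterministic function of the opponent's entire past action trajectory $(y_0, \ldots, y_{t-1})$ rather than merely of $(x_{t-1}, y_{t-1})$. Consequently $\sigma_X^t$ and $\mathbf{y}(h_{t-1})$ can be genuinely correlated after conditioning on the previous action pair, and the clean one-step kernel argument from the memory-one case no longer suffices. My plan is to enlarge the state to $(x_{t-1}, y_{t-1}, \sigma_X^t) \in \{C,D\}^2 \times [0,1]$, which is Markov under $(\mathbf{x}, \mathbf{y}')$ because the reactive-learning rule makes $\sigma_X^{t+1}$ a deterministic function of $(\sigma_X^t, y_t)$. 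The technical heart of the proof is then to show, via careful bookkeeping of how $\sigma_X^t$ couples with the history under $\mathbf{y}$ versus $\mathbf{y}'$, that the marginals determining $E[u(x_t, y_t)]$ agree in the two processes; combining the construction of $\mathbf{y}'$ from conditional probabilities with the deterministic update rule for $\sigma_X^t$ and the tower property should yield the required identities. I expect this coupling step to be the hardest part of the proof.
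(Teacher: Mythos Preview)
Your treatment of the memory-one case is correct and is essentially the Press--Dyson argument the paper invokes: defining $\mathbf{y}'(t,x,y)=P(y_t=C\mid x_{t-1}=x,\,y_{t-1}=y)$ and inducting on the one-step marginals $\mu_t$ works because $X$'s transition depends on $(x_{t-1},y_{t-1})$ alone.

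For the reactive-learning case, however, there is a genuine gap. You correctly diagnose the obstruction---$\sigma_X^t$ is a function of the entire opponent action history and can remain correlated with $\mathbf{y}(h_{t-1})$ after conditioning on $(x_{t-1},y_{t-1})$---but your proposed fix does not close it. Enlarging the \emph{state} for analysis does nothing to the \emph{strategy} $\mathbf{y}'$, which by your construction still conditions only on $(x_{t-1},y_{t-1})$. Under $(\mathbf{x},\mathbf{y}')$ one therefore has $y_t \perp \sigma_X^t \mid (x_{t-1},y_{t-1})$, whereas under $(\mathbf{x},\mathbf{y})$ this independence generally fails; hence the joint law of $(\sigma_X^t,y_t)$, and with it $\sigma_X^{t+1}=p^{\ast}_{\sigma_X^t y_t}$ and $\mu_{t+1}$, need not agree. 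Concretely, matching $\mu_{t+1}(C,y_{t+1})$ would force
\[
q^{t+1}_{x_t y_t}(y_{t+1})=\frac{E[\sigma_X^{t+1}\,\mathbf{y}(h_t)(y_{t+1})\mid x_t,y_t]}{E[\sigma_X^{t+1}\mid x_t,y_t]},
\]
while matching $\mu_{t+1}(D,y_{t+1})$ would force the analogous ratio with $1-\sigma_X^{t+1}$ in place of $\sigma_X^{t+1}$; these coincide only when $\sigma_X^{t+1}$ and $\mathbf{y}(h_t)$ are conditionally independent given $(x_t,y_t)$, which is exactly what you cannot assume. The ``careful bookkeeping'' and tower property do not manufacture this independence.

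The paper's proof departs from yours precisely here: it does \emph{not} take the naive conditional expectation. Instead it defines $q^{t+1}_{x_t y_t}$ by a weighted average in which $\mathbf{y}$ is integrated against the running conditional distribution $\chi_t$ of $\sigma_X^t$ (and against $p^{\ast}$), and it strengthens the inductive hypothesis to maintain simultaneously $\nu_t=\nu_t'$ \emph{and} $\chi_t=\chi_t'$. That additional invariant on $\chi_t$ is exactly the piece your sketch is missing.
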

\begin{proof}
When $\mathbf{p}\in\textbf{Mem}_{X}^{1}$, the lemma follows from \citep[][Appendix A]{press:PNAS:2012}. Specifically, when $X$ plays $\mathbf{p}\in\textbf{Mem}_{X}^{1}$ against $\mathbf{y}\in\mathcal{B}$, consider the time-$t$ distributions $\mu_{t}$ on $\left\{C,D\right\}^{2}$ and $\overline{\mu}_{t}$ on $\mathcal{H}_{t}$. For $\left(x_{t+1},y_{t+1}\right)\in\left\{C,D\right\}^{2}$,
\begin{align}
\mu_{t+1}\left(x_{t+1},y_{t+1}\right) &= \sum_{h_{t+1}\in\mathcal{H}_{t+1}} p_{x_{t}y_{t}}\left(x_{t+1}\right) \mathbf{y}_{h_{t+1}}\left(y_{t+1}\right) \overline{\mu}_{t+1}\left(h_{t+1}\right) \nonumber \\
&= \sum_{h_{t+1}\in\mathcal{H}_{t+1}} p_{x_{t}y_{t}}\left(x_{t+1}\right) \mathbf{y}_{\left(h_{t},\left(x_{t},y_{t}\right)\right)}\left(y_{t+1}\right) \overline{\mu}_{t+1}\left(h_{t+1}\right) \nonumber \\
&= \sum_{\left(x_{t},y_{t}\right)\in\left\{C,D\right\}^{2}} p_{x_{t}y_{t}}\left(x_{t+1}\right) \sum_{h_{t}\in\mathcal{H}_{t}} \mathbf{y}_{\left(h_{t},\left(x_{t},y_{t}\right)\right)}\left(y_{t+1}\right)\mu_{t}\left(x_{t},y_{t}\mid h_{t}\right)\overline{\mu}_{t}\left(h_{t}\right) .
\end{align}
Therefore, the same sequence of distributions $\left\{\mu_{t}\right\}_{t\geqslant 0}$ arises when $Y$ uses the Markov strategy defined by
\begin{align}
q_{x_{t}y_{t}}^{t+1}\left(y_{t+1}\right) &\coloneqq \frac{\sum_{h_{t}\in\mathcal{H}_{t}} \mathbf{y}_{\left(h_{t},\left(x_{t},y_{t}\right)\right)}\left(y_{t+1}\right)\mu_{t}\left(x_{t},y_{t}\mid h_{t}\right)\overline{\mu}_{t}\left(h_{t}\right)}{\sum_{h_{t}\in\mathcal{H}_{t}} \mu_{t}\left(x_{t},y_{t}\mid h_{t}\right)\overline{\mu}_{t}\left(h_{t}\right)} .
\end{align}

If $p^{\ast}:\left[0,1\right]\times\left\{C,D\right\}\rightarrow\left[0,1\right]$ is a reactive learning strategy that $X$ uses against $\mathbf{y}\in\mathcal{B}$, then for every $t\geqslant 0$ there are distributions $\nu_{t}$ on $\left\{C,D\right\}^{2}$, $\chi_{t}$ on $\left[0,1\right]$, and $\overline{\nu}_{t}$ on $\mathcal{H}_{t}\times\left[0,1\right]$. For $\left(x_{t+1},y_{t+1}\right)\in\left\{C,D\right\}^{2}$,
\begin{align}
&\scalebox{0.9}{$\nu_{t+1}\left(x_{t+1},y_{t+1}\right)$} \nonumber \\
&= \scalebox{0.9}{$\int\limits_{\left(h_{t+1},\sigma_{X}^{t}\right)\in\mathcal{H}_{t+1}\times\left[0,1\right]} p_{\sigma_{X}^{t}y_{t}}^{\ast}\left(x_{t+1}\right) \mathbf{y}_{h_{t+1}}\left(y_{t+1}\right) \, d\overline{\nu}_{t+1}\left(h_{t+1},\sigma_{X}^{t}\right)$} \nonumber \\
&= \scalebox{0.9}{$\sum\limits_{\left(x_{t},y_{t}\right)\in\left\{C,D\right\}^{2}}\int\limits_{\sigma_{X}^{t}\in\left[0,1\right]} p_{\sigma_{X}^{t}y_{t}}^{\ast}\left(x_{t+1}\right) \int\limits_{\left(h_{t},\sigma_{X}^{t-1}\right)\in\mathcal{H}_{t}\times\left[0,1\right]} \mathbf{y}_{\left(h_{t},\left(x_{t},y_{t}\right)\right)}\left(y_{t+1}\right) \, d\chi_{t}\left(\sigma_{X}^{t}\mid \left(h_{t},\left(x_{t},y_{t}\right)\right),\sigma_{X}^{t-1}\right)\, d\overline{\nu}_{t}\left(h_{t},\sigma_{X}^{t-1}\right)$} .
\end{align}
Consider the Markov strategy for $Y$ with $q_{0}\coloneqq y_{\varnothing}$ and $q_{x_{0}y_{0}}^{1}\left(y_{1}\right)\coloneqq\mathbf{y}_{\left(x_{0},y_{0}\right)}\left(y_{1}\right)$. For $t\geqslant 1$, let
\begin{align}
\scalebox{1.0}{$q_{x_{t}y_{t}}^{t+1}\left(y_{t+1}\right)$} &\coloneqq \scalebox{1.0}{$\frac{\int\limits_{\sigma_{X}^{t}\in\left[0,1\right]} p_{\sigma_{X}^{t}y_{t}}^{\ast}\left(x_{t+1}\right) \int\limits_{\left(h_{t},\sigma_{X}^{t-1}\right)\in\mathcal{H}_{t}\times\left[0,1\right]} \mathbf{y}_{\left(h_{t},\left(x_{t},y_{t}\right)\right)}\left(y_{t+1}\right) \, d\chi_{t}\left(\sigma_{X}^{t}\mid \left(h_{t},\left(x_{t},y_{t}\right)\right),\sigma_{X}^{t-1}\right)\, d\overline{\nu}_{t}\left(h_{t},\sigma_{X}^{t-1}\right)}{\int\limits_{\sigma_{X}^{t}\in\left[0,1\right]} p_{\sigma_{X}^{t}y_{t}}^{\ast}\left(x_{t+1}\right) \, d\chi_{t}\left(\sigma_{X}^{t}\mid x_{t},y_{t}\right) \nu_{t}\left(x_{t},y_{t}\right)}$} .
\end{align}
If $\nu_{t}'$ and $\chi_{t}'$ are the analogues of $\nu_{t}$ and $\chi_{t}$ for $p^{\ast}$ against $\left\{\mathbf{q}^{t}\right\}_{t\geqslant 1}$, then clearly $\nu_{t}=\nu_{t}'$ and $\chi_{t}=\chi_{t}'$ for $t=0,1$. Suppose that for some $t\geqslant 1$, we have $\nu_{t}=\nu_{t}'$ and $\chi_{t}=\chi_{t}'$. It follows, then, that at time $t+1$,
\begin{align}
\nu_{t+1}'\left(x_{t+1},y_{t+1}\right) &= \sum_{\left(x_{t},y_{t}\right)\in\left\{C,D\right\}^{2}} q_{x_{t}y_{t}}^{t+1}\left(y_{t+1}\right) \int\limits_{\sigma_{X}^{t}\in\left[0,1\right]} p_{\sigma_{X}^{t}y_{t}}^{\ast}\left(x_{t+1}\right) \, d\chi_{t}'\left(\sigma_{X}^{t}\mid x_{t},y_{t}\right) \, \nu_{t}'\left(x_{t},y_{t}\right) \nonumber \\
&= \sum_{\left(x_{t},y_{t}\right)\in\left\{C,D\right\}^{2}} q_{x_{t}y_{t}}^{t+1}\left(y_{t+1}\right) \int\limits_{\sigma_{X}^{t}\in\left[0,1\right]} p_{\sigma_{X}^{t}y_{t}}^{\ast}\left(x_{t+1}\right) \, d\chi_{t}\left(\sigma_{X}^{t}\mid x_{t},y_{t}\right) \, \nu_{t}\left(x_{t},y_{t}\right) \nonumber \\
&= \nu_{t+1}\left(x_{t+1},y_{t+1}\right) ,
\end{align}
which gives the desired result for $\mathbf{x}\in\textbf{RL}_{X}$.
\end{proof}

This lemma leads to a straightforward proof of our main result:
\begin{theorem}\label{thm:strategyGeometry}
$\mathcal{C}\left(\mathbf{p}^{\ast}\right)\subseteq\mathcal{C}\left(\mathbf{p}\right)$ for every $\mathbf{p}\in\textbf{Mem}_{X}^{1}$.
\end{theorem}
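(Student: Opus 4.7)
The plan is to show, for any fixed $\mathbf{p}\in\textbf{Mem}_{X}^{1}$ and any Markov strategy $\mathbf{y}\in\mathcal{M}$ for the opponent (which by Lemma~\ref{lem:markov} is enough to cover all of $\mathcal{C}(\mathbf{p}^{\ast})$), that there is an opponent strategy $\mathbf{q}$ making the per-round joint action distributions of $(\mathbf{p},\mathbf{q})$ agree with those of $(\mathbf{p}^{\ast},\mathbf{y})$. Since the expected payoffs depend only on these marginals, this will give $(\pi_{Y}(\mathbf{p}^{\ast},\mathbf{y}),\pi_{X}(\mathbf{p}^{\ast},\mathbf{y}))\in\mathcal{C}(\mathbf{p})$. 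The structural lever is the linearity of $\mathbf{p}^{\ast}$: the update $\sigma_{X}^{t+1}=\sigma_{X}^{t}p_{Cy_{t}}+(1-\sigma_{X}^{t})p_{Dy_{t}}$ is independent of $x_{t}$, so $\sigma_{X}^{t}$ is a deterministic function of $(y_{0},\dots,y_{t-1})$, and consequently under $(\mathbf{p}^{\ast},\mathbf{y})$ we have $P_{\mathbf{p}^{\ast},\mathbf{y}}(x_{t}=C\mid y_{0},\dots,y_{t-1})=\sigma_{X}^{t}$ while marginalizing $x_{t-1}\sim\mathrm{Bern}(\sigma_{X}^{t-1})$ in $\mathbf{y}^{t}_{x_{t-1}y_{t-1}}$ yields $P_{\mathbf{p}^{\ast},\mathbf{y}}(y_{t}\mid y_{0},\dots,y_{t-1})=\sigma_{X}^{t-1}\,\mathbf{y}^{t}_{Cy_{t-1}}(y_{t})+(1-\sigma_{X}^{t-1})\,\mathbf{y}^{t}_{Dy_{t-1}}(y_{t})$.

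Motivated by this, I would define a behavioral strategy $\mathbf{q}$ for $Y$ that consults only the $y$-coordinates of its history by
\[
\mathbf{q}_{h_{t}}(y_{t})\;:=\;\sigma_{X}^{t-1}\,\mathbf{y}^{t}_{Cy_{t-1}}(y_{t})+(1-\sigma_{X}^{t-1})\,\mathbf{y}^{t}_{Dy_{t-1}}(y_{t}),
\]
where $\sigma_{X}^{t-1}$ is obtained by iterating the $\mathbf{p}^{\ast}$-recursion on $(y_{0},\dots,y_{t-2})$. The heart of the proof would then be an induction on $t$ establishing (i) equality of the laws of $(y_{0},\dots,y_{t-1})$ under $(\mathbf{p},\mathbf{q})$ and $(\mathbf{p}^{\ast},\mathbf{y})$, and (ii) $P_{\mathbf{p},\mathbf{q}}(x_{t}=C\mid y_{0},\dots,y_{t-1})=\sigma_{X}^{t}$. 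Because $\mathbf{q}$ does not consult the $x$-history, the $y$-process under $(\mathbf{p},\mathbf{q})$ is autonomous; and because $\mathbf{y}$ is Markov, the variable $y_{t-1}$ depends on $x_{t-2}$ rather than on $x_{t-1}$, so a short Bayes argument yields $P_{\mathbf{p},\mathbf{q}}(x_{t-1}\mid y_{0},\dots,y_{t-1})=P_{\mathbf{p},\mathbf{q}}(x_{t-1}\mid y_{0},\dots,y_{t-2})=\sigma_{X}^{t-1}$. The memory-one update of $\mathbf{p}$, namely $x_{t}\sim\mathrm{Bern}(p_{x_{t-1}y_{t-1}})$, then folds this forward to $\sigma_{X}^{t-1}p_{Cy_{t-1}}+(1-\sigma_{X}^{t-1})p_{Dy_{t-1}}=\sigma_{X}^{t}$, and the definition of $\mathbf{q}$ was engineered precisely so that $y_{t}$'s conditional law matches as well.

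Since $x_{t}$ and $y_{t}$ are drawn using independent fresh randomness once $h_{t}$ is fixed, they are conditionally independent given $(y_{0},\dots,y_{t-1})$ in both dynamics, so matching the two conditional marginals pins down the joint $P(x_{t},y_{t})$ at every $t$. Hence $\pi(\mathbf{p},\mathbf{q})=\pi(\mathbf{p}^{\ast},\mathbf{y})$ for this behavioral $\mathbf{q}$, and a second appeal to Lemma~\ref{lem:markov} (or directly to the Press--Dyson reduction used in the proof of the Proposition) replaces $\mathbf{q}$ with a memory-one strategy of identical payoff. This places $(\pi_{Y}(\mathbf{p}^{\ast},\mathbf{y}),\pi_{X}(\mathbf{p}^{\ast},\mathbf{y}))$ in $\mathcal{C}(\mathbf{p})$, completing the theorem. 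The main obstacle I expect is the conditional-independence bookkeeping---carefully tracking why conditioning on $y_{t-1}$ does not update the posterior on $x_{t-1}$ under a Markov opponent---and confirming that linearity of $\mathbf{p}^{\ast}$ is exactly what allows the $\sigma_{X}^{t-1}$-weighted average in $\mathbf{q}_{h_{t}}$ to be consistent with $\mathbf{p}$'s own memory-one update reproducing the same $\sigma_{X}^{t}$.
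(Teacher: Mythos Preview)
Your argument is correct, and it takes a genuinely different route from the paper's. The paper constructs, directly, a time-dependent Markov strategy $\{q^{t+1}_{x_ty_t}\}$ for the opponent by averaging $\mathbf{y}$ against the \emph{distribution} $\chi_t(\cdot\mid y_t)$ of $X$'s mixed action at time $t$, and then verifies inductively that the one-step action-pair marginals $\mu_t$ and $\nu_t^{\ast}$ agree. You instead exploit a structural fact the paper never isolates: for a \emph{linear} reactive learning strategy the recursion $\sigma_X^{t+1}=\sigma_X^t p_{Cy_t}+(1-\sigma_X^t)p_{Dy_t}$ depends on $y_t$ but not on $x_t$, so $\sigma_X^t$ is a deterministic function of the opponent's action history $(y_0,\dots,y_{t-1})$. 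This lets you build a behavioral strategy $\mathbf{q}$ that reads only the $y$-history, match the processes by a cleaner conditional-independence argument (your key point that $y_{t-1}$, coming from a Markov $\mathbf{y}$, depends on $x_{t-2}$ rather than $x_{t-1}$, so conditioning on $y_{t-1}$ does not disturb the posterior on $x_{t-1}$), and then invoke Lemma~\ref{lem:markov}/Press--Dyson a second time to land back in $\textbf{Mem}_X^1$. The paper's route is more self-contained (one explicit construction, no second reduction), while yours makes the role of linearity transparent and avoids manipulating the measure $\chi_t$ altogether; both yield the same inclusion and both extend to discounted payoffs.
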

\begin{proof}
By Lemma~\ref{lem:markov}, for $\mathbf{x}\in\textbf{RL}_{X}$, we may assume the opponent's strategy is Markovian, meaning that it has a memory of one round into the past but can depend on the current round, $t$. This dependence on $t$ distinguishes a Markov strategy from a memory-one strategy, the latter of which also has memory of one round into the past but is independent of $t$. We denote by $\mathcal{M}$ the set of all Markov strategies (\eq{markovStr}).

Let $\mathbf{p}^{\ast}$ be a linear reactive learning strategy for $X$ and suppose that $\mathbf{y}\in\mathcal{M}$. For every $t\geqslant 0$, these strategies generate a distribution $\nu_{t}^{\ast}$ over $\left\{C,D\right\}^{2}\times\left[0,1\right]$. For any strategy $\mathbf{q}$ against $\mathbf{p}$, there is a sequence of distributions $\mu_{t}$ on $\left\{C,D\right\}^{2}$ generated by these two strategies. We prove the proposition by finding $\left\{\mathbf{q}^{t}\right\}_{t\geqslant 1}\in\mathcal{M}$ such that $\mu_{t}\left(x_{t},y_{t}\right) =\nu_{t}^{\ast}\left(\left\{\left(x_{t},y_{t}\right)\right\}\times\left[0,1\right]\right)$ for every $\left(x_{t},y_{t}\right)\in\left\{C,D\right\}^{2}$ and $t\geqslant 0$.

Let $\chi_{t}$ be the (marginal) distribution on $\sigma_{X}^{t}\in\left[0,1\right]$ at time $t$. For $y_{t}\in\left\{C,D\right\}$, denote by $\chi_{t}\left(\cdot\mid y_{t}\right)$ this distribution conditioned on $Y$ using action $y_{t}$ at time $t$. For $t\geqslant 0$, consider the strategy with $q_{0}\coloneqq y_{\varnothing}$ and
\begin{subequations}
\begin{align}
q_{Cy_{t}}^{t+1}\left(y_{t+1}\right) &\coloneqq \frac{\displaystyle\int\limits_{\sigma_{X}^{t}\in\left[0,1\right]} \sigma_{X}^{t} \left(\sigma_{X}^{t}y_{Cy_{t}}^{t+1}\left(y_{t+1}\right) +\left(1-\sigma_{X}^{t}\right) y_{Dy_{t}}^{t+1}\left(y_{t+1}\right)\right) \, d\chi_{t}\left(\sigma_{X}^{t}\mid y_{t}\right)}{\displaystyle\int\limits_{\sigma_{X}^{t}\in\left[0,1\right]}\sigma_{X}^{t}\,d\chi_{t}\left(\sigma_{X}^{t}\mid y_{t}\right)} ; \\
q_{Dy_{t}}^{t+1}\left(y_{t+1}\right) &\coloneqq \frac{\displaystyle\int\limits_{\sigma_{X}^{t}\in\left[0,1\right]} \left(1-\sigma_{X}^{t}\right) \left(\sigma_{X}^{t}y_{Cy_{t}}^{t+1}\left(y_{t+1}\right) +\left(1-\sigma_{X}^{t}\right) y_{Dy_{t}}^{t+1}\left(y_{t+1}\right)\right) \, d\chi_{t}\left(\sigma_{X}^{t}\mid y_{t}\right)}{\displaystyle\int\limits_{\sigma_{X}^{t}\in\left[0,1\right]}\left(1-\sigma_{X}^{t}\right)\,d\chi_{t}\left(\sigma_{X}^{t}\mid y_{t}\right)} .
\end{align}
\end{subequations}
Clearly, $\mu_{0}\left(x_{0},y_{0}\right) =\nu_{0}^{\ast}\left(\left\{\left(x_{0},y_{0}\right)\right\}\times\left[0,1\right]\right)$ for every $\left(x_{0},y_{0}\right)\in\left\{C,D\right\}^{2}$. Suppose, for some $t\geqslant$, that $\mu_{t}\left(x_{t},y_{t}\right) =\nu_{t}^{\ast}\left(\left\{\left(x_{t},y_{t}\right)\right\}\times\left[0,1\right]\right)$ for every $\left(x_{t},y_{t}\right)\in\left\{C,D\right\}^{2}$. For $\left(x_{t+1},y_{t+1}\right)\in\left\{C,D\right\}^{2}$, we then have
\begin{align}
\mu_{t+1}\left(x_{t+1},y_{t+1}\right) &= \sum_{\left(x_{t},y_{t}\right)\in\left\{C,D\right\}^{2}} p_{x_{t}y_{t}}\left(x_{t+1}\right) q_{x_{t}y_{t}}^{t+1}\left(y_{t+1}\right) \, \mu_{t}\left(x_{t},y_{t}\right) \nonumber \\
&= \sum_{y_{t}\in\left\{C,D\right\}} \left( p_{Cy_{t}}\left(x_{t+1}\right) q_{Cy_{t}}^{t+1}\left(y_{t+1}\right) \, \mu_{t}\left(C,y_{t}\right) + p_{Dy_{t}}\left(x_{t+1}\right) q_{Dy_{t}}^{t+1}\left(y_{t+1}\right) \, \mu_{t}\left(D,y_{t}\right) \right) \nonumber \\
&= \sum_{y_{t}\in\left\{C,D\right\}} p_{Cy_{t}}\left(x_{t+1}\right)\int\limits_{\sigma_{X}^{t}\in\left[0,1\right]} \sigma_{X}^{t} \left(\sigma_{X}^{t}y_{Cy_{t}}^{t+1}\left(y_{t+1}\right) +\left(1-\sigma_{X}^{t}\right) y_{Dy_{t}}^{t+1}\left(y_{t+1}\right)\right) \, d\chi_{t}\left(\sigma_{X}^{t}\mid y_{t}\right) \nonumber \\
&\quad + \sum_{y_{t}\in\left\{C,D\right\}} p_{Dy_{t}}\left(x_{t+1}\right)\int\limits_{\sigma_{X}^{t}\in\left[0,1\right]} \left(1-\sigma_{X}^{t}\right) \left(\sigma_{X}^{t}y_{Cy_{t}}^{t+1}\left(y_{t+1}\right) +\left(1-\sigma_{X}^{t}\right) y_{Dy_{t}}^{t+1}\left(y_{t+1}\right)\right) \, d\chi_{t}\left(\sigma_{X}^{t}\mid y_{t}\right) \nonumber \\
&= \sum_{\left(x_{t},y_{t}\right)\in\left\{C,D\right\}^{2}} y_{x_{t}y_{t}}^{t+1}\left(y_{t+1}\right) \int\limits_{\sigma_{X}^{t}\in\left[0,1\right]} \left(\sigma_{X}^{t}p_{Cy_{t}}+\left(1-\sigma_{X}^{t}\right) p_{Dy_{t}}\right) \, d\nu_{t}^{\ast}\left(\left\{\left(x_{t},y_{t}\right)\right\}\times\left\{\sigma_{X}^{t}\right\}\right) \nonumber \\
&= \sum_{\left(x_{t},y_{t}\right)\in\left\{C,D\right\}^{2}} \int\limits_{\sigma_{X}^{t}\in\left[0,1\right]} p_{\sigma_{X}^{t}y_{t}}^{\ast}\left(x_{t+1}\right) y_{x_{t}y_{t}}^{t+1}\left(y_{t+1}\right) \, d\nu_{t}^{\ast}\left(\left\{\left(x_{t},y_{t}\right)\right\}\times\left\{\sigma_{X}^{t}\right\}\right) \nonumber \\
&= \nu_{t+1}^{\ast}\left(\left\{\left(x_{t+1},y_{t+1}\right)\right\}\times\left[0,1\right]\right) .
\end{align}
Therefore, by induction and the definition of expected payoff in an iterated game, $\mathcal{C}\left(\mathbf{p}^{\ast}\right)\subseteq\mathcal{C}\left(\mathbf{p}\right)$.
\end{proof}

As a consequence of Theorem~\ref{thm:strategyGeometry}, we see that $\mathbf{p}^{\ast}$ is a enforces a linear payoff relationship \citep{press:PNAS:2012} whenever $\mathbf{p}$ does. However, the converse need not hold; \fig{regionSubset}\textit{(b)} gives an example in which $X$'s payoff is a function of $Y$'s when $X$ uses $\mathbf{p}^{\ast}$ but not when $X$ uses $\mathbf{p}$. Although this example illustrates an extreme case of when the payoff region collapses, perhaps the most interesting behavior is illustrated by \fig{regionSubset}\textit{(a),(c),(d)}. In these examples, we focus on the payoff regions that can be obtained against memory-one opponents. Using $\mathbf{p}^{\ast}$ instead of $\mathbf{p}$ can both bias payoffs in favor of $X$ and limit potential losses against a spiteful opponent.

\begin{figure}
\centering
\includegraphics[width=0.8\textwidth]{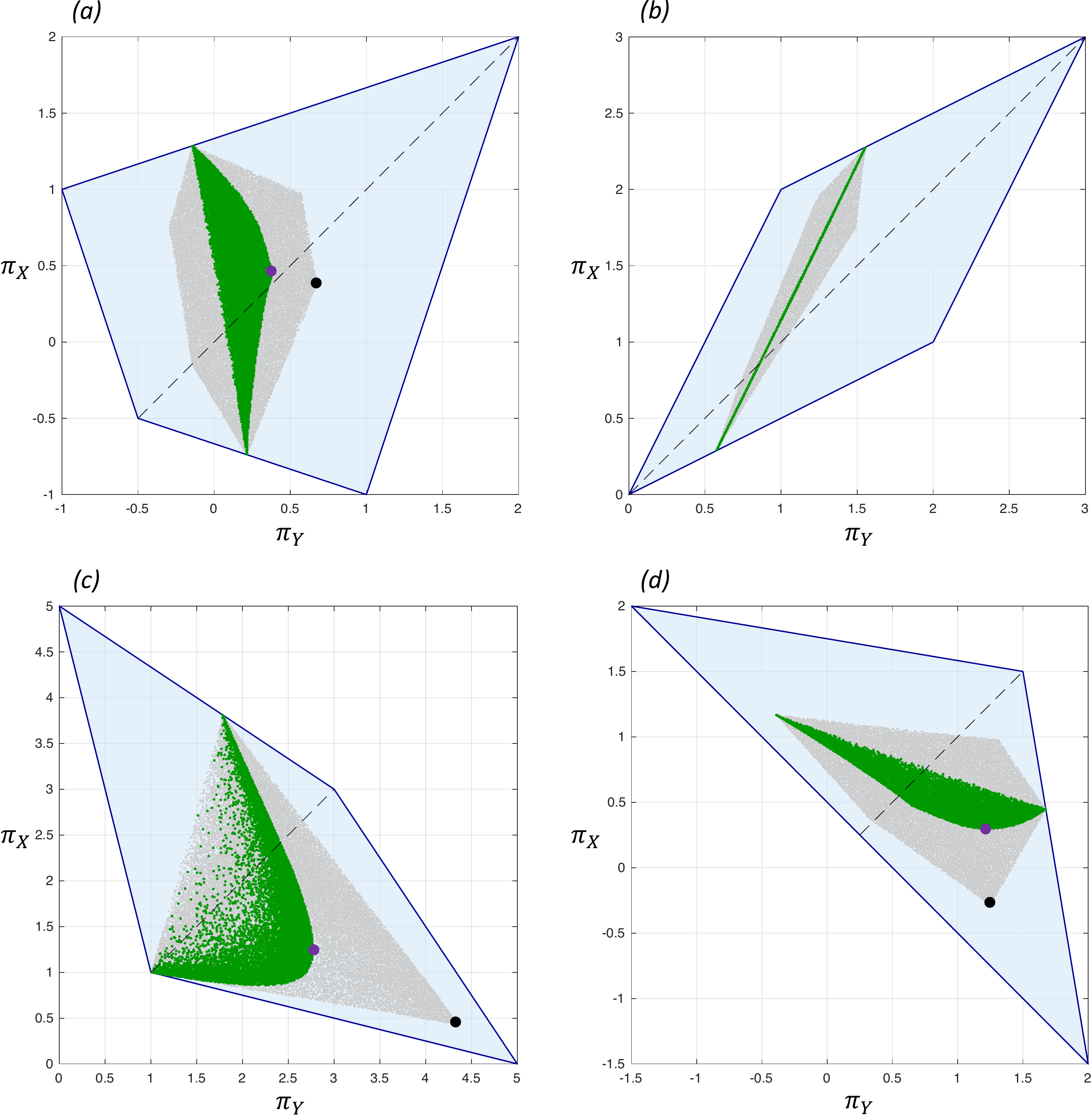}
\caption{\small Simulated payoffs against a fixed memory-one strategy, $\mathbf{p}$ (grey), and its corresponding reactive learning strategy, $\mathbf{p}^{\ast}$ (green), as the opponent plays $10^{5}$ randomly-chosen strategies $\mathbf{q}\in\textbf{Mem}_{X}^{1}$. \textit{(a)} If the opponent is greedy and wishes to optimize his or her own payoff only, then upon exploring the space $\textbf{Mem}_{X}^{1}$ for sufficiently long, the payoffs will end up at the black point when $X$ uses $\mathbf{p}$ and at the magenta point when $X$ uses $\mathbf{p}^{\ast}$. In this scenario, $\mathbf{p}$ favors $Y$ having a higher payoff than $X$, while $\mathbf{p}^{\ast}$ favors $X$ having a higher payoff than $Y$. Thus, $\mathbf{p}^{\ast}$ extorts a payoff-maximizing opponent while $\mathbf{p}$ is more generous. \textit{(b)} The payoffs against $\mathbf{p}^{\ast}$ (green) can fall along a line even when those against $\mathbf{p}$ (grey) form a two-dimensional region. In \textit{(c)}, by using $\mathbf{p}^{\ast}$ instead of $\mathbf{p}$, $X$ can limit the payoff the opponent receives from the black point to the magenta point. Similarly, in \textit{(d)}, $X$ can limit the potential ``punishment" incurred from $Y$. When $X$ uses $\mathbf{p}$, the opponent can choose a strategy that gives $X$ a negative payoff (black point). When $X$ uses $\mathbf{p}^{\ast}$, no such strategy of the opponent exists, and the worst payoff $X$ can possibly receive is positive (magenta point). The parameters used are \textit{(a)} $\mathbf{p}=\left(0.90,0.50,0.01,0.20,0.90\right)$ and $R=2$, $S=-1$, $T=1$, and $P=1/2$; \textit{(b)} $\mathbf{p}=\left(1.0000,0.6946,0.0354,0.1168,0.3889\right)$ and $R=3$, $S=1$, $T=2$, and $P=0$; \textit{(c)} $\mathbf{p}=\left(0.8623,0.6182,0.9528,0.5601,0.0001\right)$ and $R=3$, $S=0$, $T=5$, and $P=1$; and \textit{(d)} $\mathbf{p}=\left(0.5626,0.2381,0.7236,0.9537,0.1496\right)$ and $R=1/2$, $S=-3/2$, $T=2$, and $P=3/2$. Each coordinate of $\mathbf{q}$ is chosen independently from an arcsine (i.e. $\textrm{Beta}\left(1/2,1/2\right)$) distribution.\label{fig:regionSubset}}
\end{figure}

For a memory-one strategy $\mathbf{p}\in\textbf{Mem}_{X}^{1}$, we can ask how the region $\left\{\left(\pi_{Y}\left(\mathbf{p},\mathbf{q}\right) ,\pi_{X}\left(\mathbf{p},\mathbf{q}\right)\right)\right\}_{\mathbf{q}\in\textbf{Mem}_{X}^{1}}$ compares to $\left\{\left(\pi_{Y}\left(\mathbf{p}^{\ast},\mathbf{q}^{\ast}\right) ,\pi_{X}\left(\mathbf{p}^{\ast},\mathbf{q}^{\ast}\right)\right)\right\}_{\mathbf{q}\in\textbf{Mem}_{X}^{1}}$. In other words, does the map $\mathbf{p}\mapsto\mathbf{p}^{\ast}$ transform the feasible region of a strategy when the opponents are also subjected to this map? \fig{payoffDistortion} demonstrates that this map can significantly distort the distribution of payoffs within the feasible region.

\begin{figure}
\centering
\includegraphics[width=0.8\textwidth]{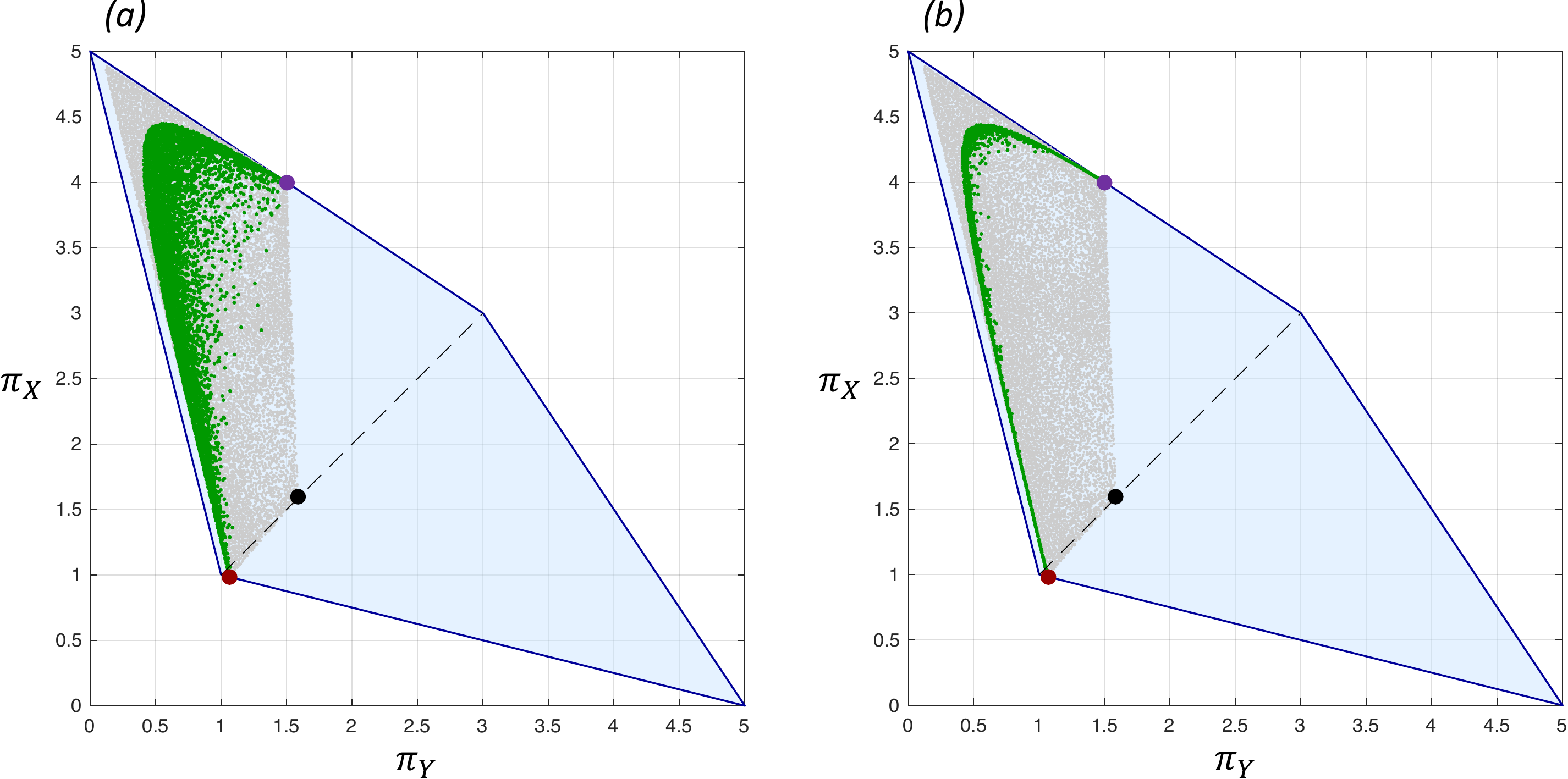}
\caption{\small Distortions in the distribution of payoffs against reactive learning strategies. In both panels, the grey region is formed by playing $10^{5}$ randomly-chosen strategies $\mathbf{q}\in\textbf{Mem}_{X}^{1}$ against a fixed strategy $\mathbf{p}\in\textbf{Mem}_{X}^{1}$. The green region in \textit{(a)} arises from simulating the payoffs of $\mathbf{p}^{\ast}$ against $10^{5}$ strategies $\mathbf{q}\in\textbf{Mem}_{X}^{1}$. In \textit{(b)}, this same reactive learning strategy, $\mathbf{p}^{\ast}$, is simulated against $10^{5}$ strategies $\mathbf{q}^{\ast}\in\textbf{RL}_{X}$ for $\mathbf{q}\in\textbf{Mem}_{X}^{1}$. In both panels, the optimal outcome for $Y$ is the black point when $X$ uses $\mathbf{p}$ and the magenta point when $X$ uses $\mathbf{p}^{\ast}$. The magenta point represents a much better outcome for $X$ and only a slightly worse outcome for $Y$ than the black point, indicating that $\mathbf{p}^{\ast}$ is highly extortionate relative to $\mathbf{p}$ when played against a payoff-maximizing opponent. In both panels, the parameters are $\mathbf{p}=\left(0.50,0.99,0.40,0.01,0.01\right)$ and $R=3$, $S=0$, $T=5$, and $P=1$. Each coordinate of $\mathbf{q}$ is chosen independently from an arcsine (i.e. $\textrm{Beta}\left(1/2,1/2\right)$) distribution.\label{fig:payoffDistortion}}
\end{figure}

\subsection{Optimization through mutation}
Suppose that $X$ uses a fixed reactive learning strategy, $\mathbf{p}^{\ast}$, for some $\mathbf{p}\in\textrm{Mem}_{X}^{1}$. Starting from some random memory-one strategy, $\mathbf{q}$, the opponent might seek to optimize his or her payoff through a series of mutations. In other words, $Y$ is subjected to the following process: First, sample a new strategy $\mathbf{q}'\in\textbf{Mem}_{X}^{1}$. If the payoff to $Y$ for $\mathbf{q}'$ against $\mathbf{p}^{\ast}$ exceeds that of $\mathbf{q}$ against $\mathbf{p}^{\ast}$, switch to $\mathbf{q}'$; otherwise, retain $\mathbf{q}$. This step then repeats until $Y$ has a sufficiently high payoff (or else has not changed strategies in some fixed number of steps). From \fig{payoffDistortion}, one expects this process to give different results from the same update scheme when $X$ plays the memory-one strategy $\mathbf{p}$ instead of $\mathbf{p}^{\ast}$.

As expected, \fig{optimization} shows that this optimization process behaves quite differently against $\mathbf{p}^{\ast}$ as it does against $\mathbf{p}$. Whereas using $\mathbf{p}$ in this example results in equitable outcomes, using $\mathbf{p}^{\ast}$ gives $X$ a much higher payoff than $Y$, indicating extortionate behavior. One can also imagine other optimization procedures (not covered here), such as when $\mathbf{q}'$ is always sufficiently close to $\mathbf{q}$ (i.e. local mutations). When $X$ uses $\mathbf{p}^{\ast}$, a path from the red point to the magenta point in \fig{payoffDistortion} through random local sampling of $\mathbf{q}$ typically requires $Y$ to initially accept lower payoffs. If $Y$ uses $\mathbf{q}^{\ast}$ instead of $\mathbf{q}$, as in \fig{payoffDistortion}\textit{(b)}, this effect is amplified.

\begin{figure}
\centering
\includegraphics[width=0.8\textwidth]{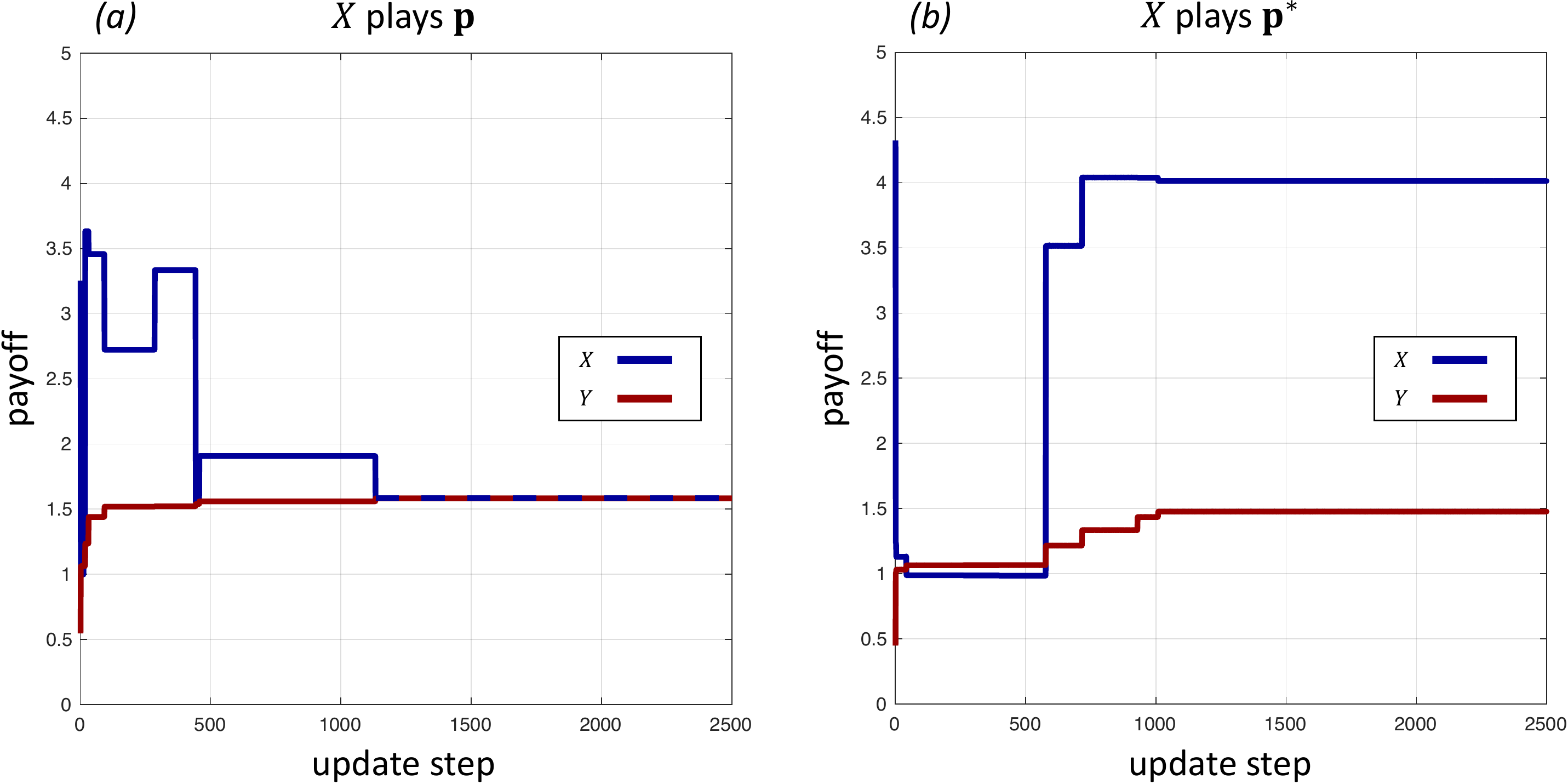}
\caption{\small Optimization against a memory-one strategy, \textit{(a)}, and the corresponding reactive learning strategy, \textit{(b)}. In each panel, $X$'s strategy is fixed with parameters $\mathbf{p}=\left(0.50,0.99,0.40,0.01,0.01\right)$. $Y$ chooses an initial memory-one strategy, $\mathbf{q}$, from an arcsine distribution. At each update step, $Y$ samples another strategy, $\mathbf{q}'$, from the same distribution. If $Y$'s payoff for playing $\mathbf{q}'$ against $X$ exceeds that of playing $\mathbf{q}$ against $X$, then $Y$ replaces his or her current strategy with $\mathbf{q}'$. Otherwise, $\mathbf{q}'$ is discarded and $Y$ retains $\mathbf{q}$. Over time, this process generates a sequence of payoff pairs for $X$ and $Y$, shown in \textit{(a)} and \textit{(b)}. Relative to $\mathbf{p}$, the reactive learning strategy $\mathbf{p}^{\ast}$ is highly extortionate.\label{fig:optimization}}
\end{figure}

\section{Discussion}
Our primary focus has been on the feasible region generated by a fixed strategy. This approach to studying $X$'s strategy is inspired by the ``zero-determinant" strategies of \citet{press:PNAS:2012}, which enforce linear subsets of the feasible region. This perspective has also been expanded to cover so-called ``partner" and ``rival" strategies \citep{akin:Games:2015,hilbe:GEB:2015,hilbe:NHB:2018}, which have proven extremely useful in understanding repeated games from an evolutionary perspective. The feasible region of a memory-one strategy, $\mathbf{p}$, is quite simple and can be characterized as the convex hull of at most $11$ points. Furthermore, these points are all straightforward to write down explicitly in terms of the payoff matrix and the entries of $\mathbf{p}$ (see \eq{elevenPoints}). The feasible region of a reactive learning strategy, in terms of its boundary and extreme points, is evidently more complicated in general.

Both memory-one and reactive learning strategies contain the set of all reactive strategies. For every memory-one strategy, $\mathbf{p}$, there exists a corresponding linear reactive learning strategy, $\mathbf{p}^{\ast}$, and this correspondence defines an injective map $\textbf{Mem}_{X}^{1}\rightarrow\textbf{RL}_{X}$. In general, however, $\mathbf{p}$ cannot be identified with its image, $\mathbf{p}^{\ast}$, unless $\mathbf{p}$ is reactive. We make this claim formally using the geometry of a strategy within the feasible region, $\mathcal{C}\left(\mathbf{p}\right)$, which captures all possible payoff pairs against an opponent. For any memory-one strategy, we have $\mathcal{C}\left(\mathbf{p}^{\ast}\right)\subseteq\mathcal{C}\left(\mathbf{p}\right)$. Therefore, reactive learning strategies generally allow a player to impose greater control over where payoffs fall within the feasible region than do traditional memory-one strategies. As illustrated in \fig{regionSubset}\textit{(a)}, this added control can prevent a greedy, self-payoff-maximizing opponent from obtaining more than $X$ when $X$ uses $\mathbf{p}^{\ast}$, even when such an opponent receives an unfair share of the payoffs when $X$ uses $\mathbf{p}$ instead. The proof of the containment $\mathcal{C}\left(\mathbf{p}^{\ast}\right)\subseteq\mathcal{C}\left(\mathbf{p}\right)$ also extends to discounted games, where each payoff unit received $t$ rounds into the future is valued at $\delta^{t}$ units at present for some ``discounting factor," $\delta\in\left[0,1\right]$.

Another property of the map $\textbf{Mem}_{X}^{1}\rightarrow\textbf{RL}_{X}$ sending $\mathbf{p}$ to $\mathbf{p}^{\ast}$ is that it distorts the distribution of payoffs within the feasible region. Since $\textbf{Mem}_{X}^{1}$ can be identified with the space of linear reactive learning strategies under this map, it is natural to compare the region of possible payoffs when $\mathbf{p}$ plays against memory-one strategies to the one obtained from when $\mathbf{p}^{\ast}$ plays against linear reactive learning strategies. These distortions, as illustrated in \fig{payoffDistortion}, are particularly relevant when $X$ plays against an opponent who is using a process such as simulated annealing to optimize payoff. One can see from this example that if $Y$ initially has a low payoff, then with localized strategy exploration they must be willing to accept lower payoffs before they find a strategy that improves their initial payoff. This concern is not relevant when $Y$ can simply compute the best response to $X$'s strategy, but it is highly pertinent to evolutionary settings in which the opponent's strategy is obtained through mutation and selection rather than ``computation."

Reactive learning strategies are also more intuitive than memory-one strategies in some ways. Rather than being a dictionary of mixed actions based on all possible observed outcomes, a reactive learning strategy is simply an algorithm for updating one's tendency to choose a certain action. It therefore allows a player to alter their behavior (mixed action) over time in response to various stimuli (actions of the opponent). This strategic approach to iterated games is reminiscent of both the Bush-Mosteller model \citep{bush:AMS:1953} and the weighted majority algorithm \citep{littlestone:ASFCS:1989}, although traditionally these models are not studied through the payoff regions they generate in iterated games. There are several interesting directions for future research in this area. For one, we have mainly considered the space of linear reactive learning strategies, but the space $\textbf{RL}_{X}$ is much larger and could potentially exhibit complicated evolutionary dynamics. Furthermore, one could relax the condition that these strategies be reactive and allow them to use $X$'s realized action in addition to $X$'s mixed action. But even without these complications, we have seen that linear reactive learning strategies have quite interesting relationships to traditional memory-one strategies.

\setcounter{section}{0}
\renewcommand{\thesection}{Appendix}
\renewcommand{\thesubsection}{\Alph{section}.\arabic{subsection}}
\renewcommand{\theequation}{\Alph{section}\arabic{equation}}
\renewcommand{\thefigure}{\Alph{section}\arabic{figure}}

\setcounter{equation}{0}
\setcounter{figure}{0}
\section{Convergence of mixed actions}\label{sec:convergenceOfMixedActions}
Suppose that $X$ and $Y$ use strategies $\left(p_{0},p^{\ast}\right)$ and $\left(q_{0},q^{\ast}\right)$, respectively. Let $\sigma_{X}^{0}=p_{0}$ and $\sigma_{Y}^{0}=q_{0}$ be the initial distributions on $\left\{C,D\right\}$ for $X$ and $Y$, respectively. If these distributions are known at time $t\geqslant 0$, then, \textit{on average}, the corresponding distributions at time $t+1$ are given by the system of equations,
\begin{subequations}\label{eq:starSystem}
\begin{align}
\sigma_{X}^{t+1} &\coloneqq \sigma_{Y}^{t}p_{\sigma_{X}^{t}C}^{\ast} + \left(1-\sigma_{Y}^{t}\right) p_{\sigma_{X}^{t}D}^{\ast} ; \\
\sigma_{Y}^{t+1} &\coloneqq \sigma_{X}^{t}q_{\sigma_{Y}^{t}C}^{\ast} + \left(1-\sigma_{X}^{t}\right) q_{\sigma_{Y}^{t}D}^{\ast} .
\end{align}
\end{subequations}
This system suggests a fixed-point analysis to determine whether the sequence $\left\{\left(\sigma_{X}^{t},\sigma_{Y}^{t}\right)\right\}_{t\geqslant 0}$ converges.

Suppose that $\left(\sigma_{X},\sigma_{Y}\right)\in\left[0,1\right]^{2}$ is a fixed point of this system, i.e.
\begin{subequations}
\begin{align}
\sigma_{X} &= \sigma_{Y}p_{\sigma_{X}C}^{\ast} + \left(1-\sigma_{Y}\right) p_{\sigma_{X}D}^{\ast} ; \\
\sigma_{Y} &= \sigma_{X}q_{\sigma_{Y}C}^{\ast} + \left(1-\sigma_{X}\right) q_{\sigma_{Y}D}^{\ast} .
\end{align}
\end{subequations}
We consider this system for two types of linear reactive learning strategies: those coming from reactive strategies and those coming from general memory-one strategies under the map $\textbf{Mem}_{X}^{1}\rightarrow\textbf{RL}_{X}$.

We first consider reactive strategies of the form $\left(p_{C},p_{D}\right)$, where $p_{C}$ (resp. $p_{D}$) is the probability a player uses $C$ after the opponent played $C$ (resp. $D$). Let $\left(p_{C},p_{D}\right)$ and $\left(q_{C},q_{D}\right)$ be fixed strategies for $X$ and $Y$. For these reactive strategies, the system \eq{starSystem} takes the form
\begin{subequations}\label{eq:reactiveSystem}
\begin{align}
\sigma_{X}^{t+1} &\coloneqq \sigma_{Y}^{t}p_{C} + \left(1-\sigma_{Y}^{t}\right) p_{D} ; \\
\sigma_{Y}^{t+1} &\coloneqq \sigma_{X}^{t}q_{C} + \left(1-\sigma_{X}^{t}\right) q_{D} .
\end{align}
\end{subequations}
One can easily check that this dynamical system has a unique fixed point, which \citet{hofbauer:CUP:1998} refer to as the ``asymptotic $C$-level" of $\left(p_{C},p_{D}\right)$ against $\left(q_{C},q_{D}\right)$, and which is given explicitly by
\begin{subequations}\label{eq:reactiveFPexplicit}
\begin{align}
\sigma_{X} &= \frac{p_{C}q_{D} + p_{D}\left(1-q_{D}\right)}{1-\left(p_{C}-p_{D}\right)\left(q_{C}-q_{D}\right)} ; \\
\sigma_{Y} &= \frac{p_{D}q_{C} + \left(1-p_{D}\right) q_{D}}{1-\left(p_{C}-p_{D}\right)\left(q_{C}-q_{D}\right)} .
\end{align}
\end{subequations}

Furthermore, we have the following, straightforward convergence result:
\begin{proposition}\label{prop:reactiveFP}
If $\left(p_{C},p_{D}\right) ,\left(q_{C},q_{D}\right)\in\left(0,1\right)^{2}$, and if $\left(\sigma_{X},\sigma_{Y}\right)\in\left(0,1\right)^{2}$ is given by \eq{reactiveFPexplicit}, then
\begin{align}
\lim_{t\rightarrow\infty}\left(\sigma_{X}^{t},\sigma_{Y}^{t}\right) =\left(\sigma_{X},\sigma_{Y}\right)
\end{align}
for any initial condition, $\left(p_{0},q_{0}\right)\in\left[0,1\right]^{2}$.
\end{proposition}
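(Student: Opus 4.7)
The plan is to exploit the affine-linear structure of the system \eq{reactiveSystem}. Setting $a\coloneqq p_{C}-p_{D}$ and $b\coloneqq q_{C}-q_{D}$, the iteration becomes
\begin{align}
\sigma_{X}^{t+1} = p_{D} + a\,\sigma_{Y}^{t}, \qquad \sigma_{Y}^{t+1} = q_{D} + b\,\sigma_{X}^{t},
\end{align}
i.e.\ a two-dimensional affine map $\mathbf{v}^{t+1}=A\mathbf{v}^{t}+\mathbf{c}$ with $A=\bigl(\begin{smallmatrix}0 & a\\ b & 0\end{smallmatrix}\bigr)$ and $\mathbf{c}=(p_{D},q_{D})^{\T}$. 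Under the hypothesis $\left(p_{C},p_{D}\right),\left(q_{C},q_{D}\right)\in\left(0,1\right)^{2}$, both $|a|<1$ and $|b|<1$, so $|ab|<1$.

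First I would compose the iteration with itself to decouple the two coordinates:
\begin{align}
\sigma_{X}^{t+2} = p_{D}+a\sigma_{Y}^{t+1} = (p_{D}+aq_{D}) + (ab)\,\sigma_{X}^{t},
\end{align}
and similarly for $\sigma_{Y}^{t+2}$. Next I would check algebraically that the explicit point $\left(\sigma_{X},\sigma_{Y}\right)$ in \eq{reactiveFPexplicit} satisfies $\sigma_{X}=p_{D}+aq_{D}+(ab)\sigma_{X}$ (the numerator in \eq{reactiveFPexplicit} simplifies to $p_{D}+aq_{D}$, and $1-ab$ appears in the denominator), and analogously for $\sigma_{Y}$. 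Subtracting this fixed-point identity from the two-step recursion yields
\begin{align}
\sigma_{X}^{t+2}-\sigma_{X} = (ab)\bigl(\sigma_{X}^{t}-\sigma_{X}\bigr),
\end{align}
and likewise for $\sigma_{Y}$. Since $|ab|<1$, both the even- and odd-indexed subsequences of $\{\sigma_{X}^{t}\}$ converge geometrically to $\sigma_{X}$, hence the full sequence does; the argument for $\sigma_{Y}^{t}$ is identical.

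There is really no hard step here, just a straightforward contraction argument; the only mild obstacle is verifying that the prescribed fixed point in \eq{reactiveFPexplicit} matches the decoupled recursion, which is a direct algebraic identity. One could alternatively invoke the spectral-radius criterion for affine iterations — the eigenvalues of $A$ are $\pm\sqrt{ab}$, each of modulus $\sqrt{|ab|}<1$, so $(I-A)$ is invertible and $\mathbf{v}^{t}\to (I-A)^{-1}\mathbf{c}$ for every initial condition — but the two-step recursion makes the contraction rate explicit and keeps the argument elementary.
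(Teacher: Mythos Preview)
Your argument is correct. Both your proof and the paper's are elementary contraction arguments on the affine system \eq{reactiveSystem}, but they are organized differently. The paper applies the Banach contraction mapping theorem directly to the two-dimensional map $f(x,y)=(p_{D}+ay,\,q_{D}+bx)$, observing that $\|f(x,y)-f(x',y')\|\leqslant\lambda\|(x,y)-(x',y')\|$ with $\lambda=\max\{|a|,|b|\}<1$; uniqueness and global convergence then follow immediately, and one checks afterwards that \eq{reactiveFPexplicit} is the fixed point. You instead compose the map with itself to decouple the coordinates into scalar recursions $\sigma_{X}^{t+2}-\sigma_{X}=(ab)(\sigma_{X}^{t}-\sigma_{X})$, which makes the geometric rate $|ab|$ explicit and avoids quoting Banach as a black box. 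Your route also yields a slightly sharper two-step contraction factor ($|ab|\leqslant\lambda^{2}$, with strict inequality when $|a|\neq|b|$), while the paper's one-line Lipschitz estimate is a bit more concise. The spectral-radius alternative you mention is equivalent to both.
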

\begin{proof}
For $\left(p_{C},p_{D}\right) ,\left(q_{C},q_{D}\right)\in\left(0,1\right)^{2}$, consider the map
\begin{align}
f &: \left[ 0,1 \right]^{2} \longrightarrow \left[ 0,1 \right]^{2} \nonumber \\
&: \begin{pmatrix} x \\ y \end{pmatrix} \longmapsto \begin{pmatrix} yp_{C} + \left(1-y\right) p_{D} \\ xq_{C} + \left(1-x\right)q_{D} \end{pmatrix} .
\end{align}
For $\left(x,y\right) ,\left(x',y'\right)\in\left[0,1\right]^{2}$, we have
\begin{align}
f\left(x,y\right) -f\left(x',y'\right) &= \begin{pmatrix}\left(y-y'\right)\left(p_{C}-p_{D}\right) \\ \left(x-x'\right)\left(q_{C}-q_{D}\right)\end{pmatrix} .
\end{align}
It follows that $\left\| f\left(x,y\right) -f\left(x',y'\right)\right\|\leqslant\lambda \left\|\left(x,y\right) -\left(x',y'\right)\right\|$, where $\lambda\coloneqq\max\left\{\left| p_{C}-p_{D}\right| ,\left| q_{C}-q_{D}\right|\right\}<1$. By the contraction mapping theorem, there is then a unique fixed point $\left(\sigma_{X},\sigma_{Y}\right)\in\left[0,1\right]^{2}$ such that
\begin{align}
\lim_{t\rightarrow\infty} f^{t}\left(p_{0},q_{0}\right) &= \left(\sigma_{X},\sigma_{Y}\right)
\end{align}
for any $\left(p_{0},q_{0}\right)\in\left[0,1\right]^{2}$. It is straightforward to check that \eq{reactiveFPexplicit} is a fixed point of \eq{reactiveSystem}.
\end{proof}

In particular, if $\mu\coloneqq\left(\sigma_{X}\sigma_{Y},\sigma_{X}\left(1-\sigma_{Y}\right) ,\left(1-\sigma_{X}\right)\sigma_{Y},\left(1-\sigma_{X}\right)\left(1-\sigma_{Y}\right)\right)$, then a straightforward calculation shows that $\mu$ is the stationary distribution of $M\left(\left(p_{C},p_{D},p_{C},p_{D}\right) ,\left(q_{C},q_{D},q_{C},q_{D}\right)\right)$ (\eq{memOneMatrix}).

\begin{remark}
Proposition~\ref{prop:reactiveFP} need not hold if $p_{y}$ and $q_{x}$ are not strictly between $0$ and $1$. For example, when $X$ and $Y$ both play TFT, $f$ is a simple involution with $f\left(x,y\right) =\left(y,x\right)$, which preserves distance.
\end{remark}

Consider now the case of general memory-one strategies with $\mathbf{p}_{\bullet\bullet}\coloneqq\left(p_{CC},p_{CD},p_{DC},p_{DD}\right)$ for $X$ and $\mathbf{q}_{\bullet\bullet}\coloneqq\left(q_{CC},q_{CD},q_{DC},q_{DD}\right)$ for $Y$. For these strategies, the system defined by \eq{starSystem} has the form
\begin{subequations}\label{eq:memOneSystem}
\begin{align}
\sigma_{X}^{t+1} &\coloneqq \sigma_{Y}^{t}\left(\sigma_{X}^{t}p_{CC}+\left(1-\sigma_{X}^{t}\right) p_{DC}\right) + \left(1-\sigma_{Y}^{t}\right) \left(\sigma_{X}^{t}p_{CD}+\left(1-\sigma_{X}^{t}\right) p_{DD}\right) ; \\
\sigma_{Y}^{t+1} &\coloneqq \sigma_{X}^{t}\left(\sigma_{Y}^{t}q_{CC}+\left(1-\sigma_{Y}^{t}\right) q_{DC}\right) + \left(1-\sigma_{X}^{t}\right)\left(\sigma_{Y}^{t}q_{CD}+\left(1-\sigma_{Y}^{t}\right) q_{DD}\right) .
\end{align}
\end{subequations}
In the spirit of Proposition~\ref{prop:reactiveFP}, for fixed $\mathbf{p}_{\bullet\bullet},\mathbf{q}_{\bullet\bullet}\in\left(0,1\right)^{4}$, we could consider the map
\begin{align}
F &: \left[ 0,1 \right]^{2} \longrightarrow \left[ 0,1 \right]^{2} \nonumber \\
&: \begin{pmatrix} x \\ y \end{pmatrix} \longmapsto \begin{pmatrix} y\left(xp_{CC}+\left(1-x\right) p_{DC}\right) + \left(1-y\right) \left(xp_{CD}+\left(1-x\right) p_{DD}\right) \\ x\left(yq_{CC}+\left(1-y\right) q_{DC}\right) + \left(1-x\right)\left(yq_{CD}+\left(1-y\right) q_{DD}\right) \end{pmatrix}
\end{align}
and analyze its fixed points. At this point, however, a couple of remarks are in order:
\begin{enumerate}

\item[\textit{(i)}] $F$ need not be a contraction, even when $\mathbf{p}_{\bullet\bullet}$ and $\mathbf{q}_{\bullet\bullet}$ have entries strictly between $0$ and $1$. For example, with $\mathbf{p}_{\bullet\bullet}=\left(0.9566,0.2730,0.0056,0.0095\right)$ and $\mathbf{q}_{\bullet\bullet}=\left(0.9922,0.0918,0.3217,0.0054\right)$,
\begin{align}
0.0441 &= \left\| F\left(0.7404,0.6928\right) - F\left(0.8241,0.8280\right) \right\| \nonumber \\
&> \left\| \left(0.7404,0.6928\right) - \left(0.8241,0.8280\right) \right\| = 0.0253 .
\end{align}
We would conjecture that this map is an \textit{eventual} contraction, in which case the convergence result of Proposition~\ref{prop:reactiveFP} still holds (although the explicit formulas for $\sigma_{X}$ and $\sigma_{Y}$ differ from \eq{reactiveFPexplicit}).

\item[\textit{(ii)}] a fixed point of $F$, $\left(\sigma_{X},\sigma_{Y}\right)$, even when it exists and is unique, generally does not have the property that $\mu\left(\mathbf{p},\mathbf{q}\right) =\left(\sigma_{X}\sigma_{Y},\sigma_{X}\left(1-\sigma_{Y}\right) ,\left(1-\sigma_{X}\right)\sigma_{Y},\left(1-\sigma_{X}\right)\left(1-\sigma_{Y}\right)\right)$, where $\mu$ is the stationary distribution of \eq{memOneMatrix}. Furthermore, the long-run mean-frequency distribution on $\left\{C,D\right\}^{2}$ can be distinct from \textit{both} of these distributions, including when the opponent plays $\mathbf{q}$ against $\mathbf{p}^{\ast}$ and when they play $\mathbf{q}^{\ast}$ against $\mathbf{p}^{\ast}$. An example of when these four distributions are pairwise distinct is easy to write down, e.g. $\mathbf{p}=\left(0.01,0.01,0.01,0.99,0.01\right)$ and $\mathbf{q}=\left(0.99,0.99,0.01,0.99,0.99\right)$. All four distributions coincide when $\mathbf{p}$ and $\mathbf{q}$ are both reactive, but in general they can be distinct.

\end{enumerate}

\section*{Acknowledgments}
The authors are grateful to Krishnendu Chatterjee, Christian Hilbe, and Joshua Plotkin for many helpful conversations and for feedback on earlier versions of this work.

\section*{Funding statement}
The authors gratefully acknowledge support from the Lifelong Learning Machines program from DARPA/MTO. Research was sponsored by the Army Research Laboratory (ARL) and was accomplished under Cooperative Agreement Number W911NF-18-2-0265. The views and conclusions contained in this document are those of the authors and should not be interpreted as representing the official policies, either expressed or implied, of the Army Research Laboratory or the U.S. Government. The U.S. Government is authorized to reproduce and distribute reprints for Government purposes notwithstanding any copyright notation herein.

\end{document}